\newtheorem{lemma}{Lemma}
\newtheorem{proposition}{Proposition}
\long\def\symbolfootnote[#1]#2{\begingroup%
\def\thefootnote{\fnsymbol{footnote}}\footnote[#1]{#2}\endgroup}
\newtheorem{theorem}{Theorem}
\newcommand{\dv}{\mathbf} % determenistic vector
\newcommand{\mc}{\mathcal} % determenistic vector
\newcommand{\mkv}{-\!\!\!\!\minuso\!\!\!\!-}
\newcommand*{\str}{^{\mathsf{*}}}
\newcommand{\abs}[1]{\lvert#1\rvert}
\DeclareMathOperator*{\argmin}{min}
\newcounter{MYtempeqncnt}
\algnewcommand{\Inputs}[1]{%
	\State \textbf{input:}
	%  \Statex   \hspace*{\algorithmicindent}
	\parbox[t]{.8\linewidth}{\raggedright #1}
}
\algnewcommand{\Initialize}[1]{%
	\State \textbf{initialization}
	%  \Statex \hspace*{\algorithmicindent}
	\parbox[t]{.95\linewidth}{\raggedright #1}
}
\algnewcommand{\Outputs}[1]{%
	\State \textbf{output:}
	%  \Statex   \hspace*{\algorithmicindent}
	\parbox[t]{.8\linewidth}{\raggedright #1}
}
\tikzstyle{block}=[draw, rectangle, text centered, minimum width=2em, minimum height=3em]
\tikzstyle{sum}=[draw, circle, minimum size=.2cm]
\tikzstyle{rect}=[draw, dashed, rectangle, text centered, minimum width=12em, minimum height=8em]
\tikzstyle{oval}=[draw, ellipse, minimum size=1em, minimum height=1em, align=center]
\begin{document}
\fontencoding{OT1}\fontsize{10}{11}\selectfont

\pagenumbering{gobble}

\title{A Generalization of Blahut-Arimoto Algorithm to Compute Rate-Distortion Regions of Multiterminal Source Coding Under Logarithmic Loss} 
\author{
Yi{\u{g}}it U{\u{g}}ur $^{\dagger}$$^{\ddagger}$ \qquad \quad I\~naki Estella Aguerri $^{\dagger}$ \qquad \quad Abdellatif Zaidi $^{\dagger}$$^{\ddagger}$ \vspace{0.1cm} \\   
{\small
$^{\dagger}$ Mathematics and Algorithmic Sciences Lab., France Research Center, Huawei Technologies, Boulogne-Billancourt, 92100, France\\
$^{\ddagger}$ Universit\'e Paris-Est, Champs-sur-Marne, 77454, France\\
\{\tt yigit.ugur@huawei.com,  inaki.estella@huawei.com, abdellatif.zaidi@u-pem.fr\} }
\vspace{-8mm}

} 

\maketitle
\begin{abstract} 
In this paper, we present iterative algorithms that numerically compute the rate-distortion regions of two problems: the two-encoder multiterminal source coding problem and the Chief Executive Officer (CEO) problem, both under logarithmic loss distortion measure. With the clear connection of these models with the distributed information bottleneck method, the proposed algorithms may find usefulness in a variety of applications, such as clustering, pattern recognition and learning. We illustrate the efficiency of our algorithms through some numerical examples. 
\end{abstract}

\IEEEpeerreviewmaketitle

\vspace{-2mm}
\section{Introduction}

The logarithmic loss (log-loss) function is a widely used penalty function that is particularly natural in settings in which reconstructions are allowed to be `soft', rather than `hard' or deterministic. That is, settings in which decoders or estimators output not only estimate values but also assessment of the levels of confidence in those values. More specifically, for a length-$n$ vector or sequence $\dv x=(x_1,\hdots,x_n)$ with element $x_i$, $i=1,\hdots,n$, in some alphabet $\mc X_i$, its reconstruction version or estimate is a vector $\hat{\dv x}=(\hat{x}_1,\hdots,\hat{x}_n)$ for which every component $\hat{x}_i$ is a probability distribution on $\mc X_i$. The symbol-wise distortion between $x_i$ and $\hat{x}_i$ is measured as
\begin{equation}
\vspace{-1mm}
d(x_i,\hat{x}_i) = \log\Big(\frac{1}{\hat{x}_i(x_i)}\Big),
\label{definition-log-loss-distortion-measure}
\end{equation}
where $\hat{x}_i(x_i)$ represents the value of the probability distribution $\hat{x}_i$ evaluated for the outcome $x_i$. Using this symbol-wise distortion, distortion between sequences is then defined as
\begin{equation*}
d^{(n)}(\dv x, \hat{\dv x}) = \frac{1}{n} \sum_{i=1}^{n} d(x_i,\hat{x}_i).
\vspace{-1mm}
\end{equation*}

The logarithmic loss function~\eqref{definition-log-loss-distortion-measure} has many appreciable features. First, it is used as a penalty criterion in various contexts, including clustering and classification~\cite{Tishby}, pattern recognition, learning and prediction~\cite{C-BL06}, image processing~\cite{AABG06} and others. Second, it was recently shown in a remarkable paper by Courtade and Weissman~\cite{Weissman} to admit key properties that allow to solve multiterminal source coding problems that are known to be difficult otherwise, in the sense that their solutions are still to be found for general distortion measures. Specifically, as mentioned in~\cite{Weissman}, the log-loss distortion measure admits a lower bound in the form of conditional entropy. Using this key finding, Courtade and Weissman successfully establish the single-letter characterization of the achievable rate-distortion (RD) region of the classical two-encoder multiterminal source coding problem~\cite[Theorem 6]{Weissman}, as well as that of the Chief Executive Officer (CEO) problem~\cite[Theorem 3]{Weissman}, both under log-loss distortion measure.

\begin{figure}[h!]
\centering 
\small
\begin{tikzpicture}
\node (in1) at (0,0.8) [left]{$X$};
\draw (in1) -- (0.2,0.8);
\node[dotted,thick] (ch1) at (1.5,1.6) [oval, align=center] {Channel\\$p(y_{1}|x)$};	
\node[dotted,thick] (ch2) at (1.5,0) [oval, align=center] {Channel\\$p(y_{2}|x)$};
\draw[->] (0.2,0.8) |- (ch1.west);
\draw[->] (0.2,0.8) |- (ch2.west);
\node (enc1) at (4.5,1.6) [block, minimum width=2em, minimum height=2em, align=center] {Encoder 1};	
	\node (enc2) at (4.5,0) [block, minimum width=2em, minimum height=2em, align=center] {Encoder 2};	
	\draw[->] (ch1.east) to node[above] {$Y_{1}$} (enc1.west);
	\draw[->] (ch2.east) to node[below] {$Y_{2}$} (enc2.west);
	\node (dec) at (6.5,0.8) [block, minimum width=2em, minimum height=8em, align=center] {\rotatebox{90}{Decoder}};	
	\draw[->] (enc1.east) to node[above] {$R_{1}$} ($(dec.west)+(0,0.8)$);
	\draw[->] (enc2.east) to node[below] {$R_{2}$} ($(dec.west)+(0,-0.8)$);
	\draw[->] (dec.east) to ($(dec.east)+(0.3,0)$) node[right] {$\hat{X}$};	
	\end{tikzpicture}
%\begin{tikzpicture}
%\node (in1) at (0,0.8) [left]{$X$};
%\draw (in1) -- (0.2,0.8);
%\node[dotted,thick] (ch1) at (1.5,1.6) [oval, align=center] {Channel\\$p(y_{1}|x)$};	
%\node[dotted,thick] (ch2) at (1.5,0) [oval, align=center] {Channel\\$p(y_{2}|x)$};
%\draw[->] (0.2,1) |- (ch1.west);
%\draw[->] (0.2,1) |- (ch2.west);
%\node (enc1) at (4.5,1.6) [block, minimum width=2em, minimum height=2.6em, align=center] {Encoder 1};	
%\node (enc2) at (4.5,0) [block, minimum width=2em, minimum height=2.6em, align=center] {Encoder 2};	
%\draw[->] (ch1.east) to node[above] {$Y_{1}$} (enc1.west);
%\draw[->] (ch2.east) to node[below] {$Y_{2}$} (enc2.west);
%\node (dec) at (7,0.8) [block, minimum width=2em, minimum height=8em, align=center] {\rotatebox{90}{Decoder}};	
%\draw[->] (enc1.east) to node[above] {$R_{1}$} ($(dec.west)+(0,0.8)$);
%\draw[->] (enc2.east) to node[below] {$R_{2}$} ($(dec.west)+(0,-0.8)$);
%\draw[->] (dec.east) to ($(dec.east)+(0.3,0)$) node[right] {$\hat{X}$};	
%\end{tikzpicture}	
	\vspace{-1mm}
	\caption{Chief Executive Officer (CEO) source coding problem.} 
	\label{fig:CEO}
	\vspace{-3mm}
\end{figure}
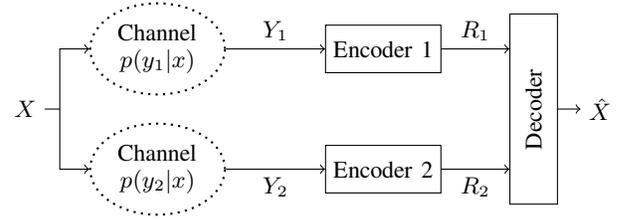

The computation of the RD regions of the aforementioned multiterminal source coding problems for general memoryless sources is important per-se; and even more considering the wide range of applications of lossy multiterminal source coding, including emerging applications in fields such as distributed learning and estimation~\cite[Chapter 9]{C-BL06}. For example, the information bottleneck method~\cite{Tishby} is an efficient data clustering algorithm, which essentially computes the RD region of a point-to-point rate-distortion problem, in which the distortion is measured under log-loss. Developing algorithms that allow to compute the RD region of multiterminal source coding problems can lead to efficient distributed algorithms for clustering and prediction.

Nonetheless, computing the RD region of multiterminal source-coding problems under log-loss for general memoryless sources is a difficult task, as it involves non-trivial optimization problems over distributions of auxiliary random variables. In this paper, we develop computational techniques for solving numerically the RD regions of  the two-encoder multiterminal source coding problem and the CEO problem, both under logarithmic loss distortion measure. Our approach for the computation of both regions consists on first reexpressing the original RD region in terms of the union of simpler regions, whose boundary points can be expressed parametrically. Then, each boundary point can be computed numerically via an appropriate iterative minimization method that we develop here. The proposed method can be regarded as a generalization of the well known Blahut-Arimoto (BA) algorithm~\cite{Blahut,Arimoto} to the aforementioned multiterminal settings. For other generalizations of this algorithm, the reader may refer to related works on point-to-point~\cite{CSX05,CB04} and broadcast and multiple access multiterminal settings~\cite{WeiYu,RG04}.

\begin{figure*}[!b]
	\hrulefill
	\small
	\setcounter{MYtempeqncnt}{\value{equation}}
	\setcounter{equation}{8}
	\begin{align}
	F_{\dv s}(\dv P) \triangleq& \: H(X|U_1,U_2) + s_1 I(Y_1;U_1|U_2)+ s_2 I(Y_2;U_2) = H(X|U_{1},U_{2}) + s_{1} [I(U_{1};Y_{1})-I(U_{1};U_{2})] + s_{2} I(U_{2};Y_{2}) \nonumber \\
	=&- \sum\nolimits_{u_{1}u_{2}x} p(u_{1},u_{2},x) \log p(x|u_{1},u_{2}) - s_{1} \sum\nolimits_{u_{1}u_{2}} p(u_{1},u_{2}) \log p(u_{1},u_{2}) - s_{2} \sum\nolimits_{u_{2}} p(u_{2}) \log p(u_{2}) \nonumber\\
	&+ s_{1} \sum\nolimits_{u_{1}y_{1}} p(u_{1}|y_{1}) p(y_{1}) \log p(u_{1}|y_{1}) + s_{2}
	\sum\nolimits_{u_{2}y_{2}} p(u_{2}|y_{2}) p(y_{2}) \log p(u_{2}|y_{2}) + s_{1} \sum\nolimits_{u_{2}} p(u_{2}) \log p(u_{2}), \label{eq:CEO_lag1}
	\end{align}
	\setcounter{equation}{\value{MYtempeqncnt}}
	\small
	\vspace{-1.5em}
	\setcounter{MYtempeqncnt}{\value{equation}}
	\setcounter{equation}{10}
	\begin{align}
	F_{\dv s}(\dv P, \dv Q) \triangleq& - \sum\nolimits_{u_{1}u_{2}x} p(u_{1},u_{2},x) \log q(x|u_{1},u_{2}) - s_{1} \sum\nolimits_{u_{1}u_{2}} p(u_{1},u_{2}) \log q(u_{1},u_{2}) - s_{2} \sum\nolimits_{u_{2}} p(u_{2}) \log q(u_{2}) \nonumber\\
	&+ s_{1} \sum\nolimits_{u_{1}y_{1}} p(u_{1}|y_{1}) p(y_{1}) \log p(u_{1}|y_{1}) + s_{2} \sum\nolimits_{u_{2}y_{2}} p(u_{2}|y_{2}) p(y_{2}) \log p(u_{2}|y_{2}) + s_{1} \sum\nolimits_{u_{2}} p(u_{2}) \log p(u_{2}).  \label{eq:CEO_lag2}
	\end{align}
	\setcounter{equation}{\value{MYtempeqncnt}}
\end{figure*}

%\vspace{0.5em}
%\textit{Notation:} We use the following notation throughout. Lower case letters denote scalars, e.g., $x$. Boldface lower case letters denote vectors, e.g., $\dv x$.  Random variables are denoted by capital letters, e.g., $X$; their corresponding alphabets are denoted by calligraphic letters, e.g., $\mathcal{X}$; and the cardinality of the alphabet $\mathcal{X}$ is denoted by $\abs{\mathcal{X}}$. For given probability distributions $p$ and $q$, teh notation $D_{\mathrm{KL}}(p\|q)$ designates the Kullback-Leibler (KL) divergence of $q$ from $p$. Finally, for a region $\mc R$, $\mathrm{conv}(\mathcal{R})$ designates its convex hull; and for an integer $i \in \{1,2\}$, $i^c$ is its complement in this set, e.g., if $i=1$ then $i^c=2$. 
\vspace{-2mm}
\section{The CEO Problem}
\label{sec:CEO} 
Consider the discrete memoryless two-encoder CEO setup shown in Figure~\ref{fig:CEO}. In this setup, $X$ is a discrete memoryless remote source with elements in some alphabet $\mc X$, and $Y_1$ and $Y_2$ are correlated memoryless observations or sources with elements in sets $\mc Y_1$ and $\mc Y_2$, respectively. The joint probability mass function (pmf) of the triple $(X,Y_{1},Y_{2})$ is $P_{X,Y_{1},Y_{2}}$, which is assumed here to satisfy the Markov chain $Y_1 \mkv X \mkv Y_2$. The source $Y_1$ is observed at Encoder 1 and the source $Y_2$ is observed at Encoder 2. The encoders are connected to a decoder through error-free bit-pipes of capacities $R_1$ and $R_2$, respectively. The decoder wants to reproduce an estimate $\hat{X}$ of the remote source $X$ to within some prescribed fidelity level $D$ where the distortion is evaluated using the measure~\eqref{definition-log-loss-distortion-measure}. That is, $\mathbb{E}[d(X,\hat{X})] \leq D$ with $d(\cdot)$ given by~\eqref{definition-log-loss-distortion-measure}.
 
First, we recall the following theorem from~\cite[Theorem 3]{Weissman} which characterizes the RD region of the CEO problem under log-loss measure. We define $i^\mathsf{c} \triangleq i \pmod{2} + 1$.
% Hereafter, the notation $\bar{\imath} \triangleq \{1, 2\} \setminus i$ is used. %We define $\bar{\imath} = \{1, 2\} \setminus i$   
\begin{theorem}{~\cite[Theorem 3]{Weissman}}
\label{theo:CEO-theorem}
The  tuple $(R_{1},R_{2},D)\!\in\! \mathcal{RD}_{\mathrm{CEO}}$ is achievable for the CEO problem under log-loss iff
\begin{align}
R_{i} & \geq I(U_{i};Y_{i}|U_{i^\mathsf{c}},Q), \quad\quad\quad \text{for }\: i=1, 2, \label{eq:ConstRi}\\
R_{1} + R_{2} & \geq I(U_{1},U_{2};Y_{1},Y_{2}|Q), \label{eq:ConstRsum}\\ 
D & \geq H(X|U_{1},U_{2},Q),
\end{align}
for some pmf 
$p(x) p(y_{1}|x) p(y_{2}|x) p(u_{1}|y_{1},q) p(u_{2}|y_{2},q) p(q)$,
where $\abs{\mathcal{U}_{1}} \leq \abs{\mathcal{Y}_{1}}$, $\abs{\mathcal{U}_{2}} \leq \abs{\mathcal{Y}_{2}}$, and $\abs{\mathcal{Q}} \leq 4$. 
\end{theorem}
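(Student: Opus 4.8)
The plan is to prove the ``iff'' of Theorem~\ref{theo:CEO-theorem} in the two usual halves --- achievability of every tuple inside the stated region, and a converse ruling out everything outside it --- by specializing Berger--Tung distributed source coding to the remote (CEO) setup. In both halves the only genuinely log-loss-specific ingredient is the elementary fact that for any pmf $q$ on $\mathcal X$, $\mathbb{E}_{X\sim p}[\log(1/q(X))] = H(p) + D(p\|q) \ge H(p)$, so the optimal ``soft'' reconstruction is the relevant posterior and it turns average distortion into a conditional entropy.

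\emph{Achievability.} Fix a pmf $p(x)p(y_1|x)p(y_2|x)p(u_1|y_1,q)p(u_2|y_2,q)p(q)$ and a rate pair satisfying the strict versions of \eqref{eq:ConstRi}--\eqref{eq:ConstRsum}. I would run a Berger--Tung scheme with time sharing along $Q^n$: Encoder~$i$ draws a $U_i^n$-codebook i.i.d.\ from $\prod_t p(u_{i,t}|q_t)$, quantizes $Y_i^n$ to a jointly typical codeword, and sends its bin index; the decoder finds the unique jointly typical pair $(U_1^n,U_2^n)$ consistent with both bins and with $Q^n$. The covering/packing and mutual-covering lemmas make this succeed w.h.p.\ exactly under \eqref{eq:ConstRi}--\eqref{eq:ConstRsum}. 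The decoder then outputs, coordinate by coordinate, $\hat x_t(\cdot)=\Pr[X_t=\cdot\mid U_{1,t},U_{2,t},Q_t]$; by the inequality above this is the best soft reconstruction, and it yields average log-loss $H(X|U_1,U_2,Q)$ up to a term vanishing with the decoding-error probability. Hence every $D\ge H(X|U_1,U_2,Q)$ is achievable.

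\emph{Converse.} Let $(R_1,R_2,D)$ be achievable at block length $n$ with messages $J_1=f_1(Y_1^n)$, $J_2=f_2(Y_2^n)$ and reconstruction $\hat X^n=g(J_1,J_2)$. Since each $\hat X_t$ is a probability vector, the log-loss bound gives $n(D+\epsilon_n)\ge H(X^n\mid J_1,J_2)$. Combine this with the cut-set-type inequalities $nR_1\ge H(J_1)\ge H(J_1|J_2)$, $nR_2\ge H(J_2|J_1)$ and $n(R_1+R_2)\ge H(J_1,J_2)$, and single-letterize: introduce a time-sharing variable $Q$ uniform over the coordinate index (augmented by whatever block ``side information'' is needed to make the auxiliaries conditionally memoryless), set $U_{1,t}$ and $U_{2,t}$ to be $J_1$, resp.\ $J_2$, decorated with the same side information, expand each block entropy as a sum of per-coordinate terms, and use the Csisz\'ar sum identity to collapse the cross terms. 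The Markov chain $Y_1\mkv X\mkv Y_2$ is what forces the resulting auxiliaries to obey the product form $p(u_1|y_1,q)p(u_2|y_2,q)$ and what makes the sum-rate telescoping go through; the block inequalities then become \eqref{eq:ConstRi}, \eqref{eq:ConstRsum} and $D\ge H(X|U_1,U_2,Q)$ after averaging over $Q$.

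\emph{Cardinality and the main obstacle.} The bounds $|\mathcal U_i|\le|\mathcal Y_i|$ and $|\mathcal Q|\le 4$ then follow from the Fenchel--Eggleston--Carath\'eodory support lemma, applied first to $p(u_i|y_i,q)$ (to preserve $p(y_i)$ plus the relevant conditionals) and then to $p(q)$ (to preserve the three functionals $R_1,R_2,D$ while keeping the factorized structure). Achievability and this last step are routine; the crux is the converse single-letterization, where one must choose the auxiliary variables so that they simultaneously satisfy the product Markov constraint --- not automatic, since $J_1$ and $J_2$ are correlated through the entire source block --- and make the sum-rate term telescope under the Csisz\'ar sum identity, and it is precisely the remote Markov chain $Y_1\mkv X\mkv Y_2$ that lets both happen at once.
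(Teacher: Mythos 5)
First, note that the paper does not prove this statement at all: Theorem~\ref{theo:CEO-theorem} is imported verbatim from Courtade and Weissman~\cite{Weissman}, so there is no in-paper proof to compare your attempt against. Judged on its own, your sketch is a reasonable outline of the standard route, but it has a genuine gap at exactly the point you flag in your closing paragraph. The achievability half and the cardinality reduction are essentially fine (modulo the usual care needed because log-loss is unbounded under a decoding error, handled by mixing the posterior with a uniform pmf).

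The problem is the converse. The natural single-letterization sets $U_{i,t}=(J_i,Y_i^{t-1})$ (or a similar decoration of $J_i$ by past or future symbols); with this choice the rate and distortion inequalities do single-letterize, and each short chain $U_{i,t} \mkv Y_{i,t} \mkv (X_t,Y_{i^\mathsf{c},t})$ holds because the source is memoryless. But the joint law of $(U_{1,t},U_{2,t})$ does \emph{not} factor as $p(u_1|y_1,q)\,p(u_2|y_2,q)$: $J_1$ and $J_2$ are functions of the entire blocks $Y_1^n$ and $Y_2^n$, which are correlated through $X_{t'}$ for $t'\neq t$, so $U_{1,t}$ and $U_{2,t}$ remain dependent even conditioned on $(X_t,Y_{1,t},Y_{2,t})$ and on any admissible time-sharing variable. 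The remote chain $Y_1 \mkv X \mkv Y_2$ does not repair this, and the Csisz\'ar sum identity is not the relevant tool. Closing this gap --- showing that the outer bound with \emph{dependent} auxiliaries collapses onto the Berger--Tung inner bound with conditionally independent ones --- is precisely the main technical contribution of~\cite{Weissman}; it is done by exploiting properties specific to log-loss together with an analysis of the corner points of the contra-polymatroidal rate region (matching each vertex to a successive Wyner--Ziv scheme), not by a direct identification of auxiliaries. As written, your converse asserts the conclusion at the one step where the known proof requires a new idea.
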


\begin{figure*}[!b]
	\normalsize
	\hrulefill
	\setcounter{MYtempeqncnt}{\value{equation}}
	\setcounter{equation}{14}
	\begin{equation}	
	\label{eq:rho_ceo}	
	\begin{aligned}
	\rho_{1}(u_{1},y_{1}) &\triangleq \frac{1}{s_{1}} \sum_{u_{2}x} p(x|y_{1}) p(u_{2}|x) \log q(x|u_{1},u_{2}) + \sum_{u_{2}x} p(x|y_{1}) p(u_{2}|x) \log q(u_{1},u_{2}),  \\
	\rho_{2}(u_{2},y_{2}) &\triangleq \frac{1}{s_{2}} \sum_{u_{1}x} p(x|y_{2}) p(u_{1}|x) \log q(x|u_{1},u_{2}) + \frac{s_{1}}{s_{2}} \sum_{u_{1}x} p(x|y_{2}) p(u_{1}|x) \log q(u_{1},u_{2})
	+ \log q(u_{2}) - \frac{s_{1}}{s_{2}} \log p(u_{2}). 
	\end{aligned}
	\end{equation}	
	\setcounter{equation}{\value{MYtempeqncnt}}
	\vspace{-0.2em}
	\hrulefill
	\footnotesize
	\setcounter{MYtempeqncnt}{\value{equation}}
	\setcounter{equation}{15}
	\begin{align}
	\log p(u_{2}|y_{2}) =& \: \frac{1}{s_{2}} \sum_{u_{1}x} p(x|y_{2}) p(u_{1}|x) \log q(x|u_{1},u_{2}) + \frac{s_{1}}{s_{2}} \sum_{u_{1}x} p(x|y_{2}) p(u_{1}|x) \log q(u_{1},u_{2}) + \log q(u_{2}) - \frac{s_{1}}{s_{2}} \log p(u_{2}) + \frac{\lambda_{2}(y_{2})}{s_{2}p(y_{2})} - \frac{s_{1}+s_{2}}{s_{2}}, \nonumber \\
	\log p(u_{1}|y_{1}) =& \: \frac{1}{s_{1}} \sum\nolimits_{u_{2}x} p(x|y_{1}) p(u_{2}|x) \log q(x|u_{1},u_{2}) + \sum\nolimits_{u_{2}x} p(x|y_{1}) p(u_{2}|x) \log q(u_{1},u_{2}) + \frac{\lambda_{1}(y_{1})}{s_{1}p(y_{1})} - 1.  \label{eq:CEO-logp} 
	\end{align}
	\setcounter{equation}{\value{MYtempeqncnt}}
\end{figure*}

In this section, we develop a BA-type algorithm that allows to compute the convex region $\mathcal{RD}_{\mathrm{CEO}}$ for general memoryless sources. The outline of the proposed method is as follows. First, we rewrite the RD region $\mathcal{RD}_{\mathrm{CEO}}$ in terms of the union of two simpler regions in Proposition~\ref{prop:UnionCEO}. The tuples lying on the boundary of each region are parametrically given in Theorem~\ref{theo:CEO-param}. Then, the boundary points of each simpler region are computed numerically via an alternating minimization method derived in Section~\ref{sec:algo} and detailed in Algorithm~\ref{algo:CEO}. Finally, the original RD region is obtained as the convex hull of the union of the tuples obtained for the two simple regions.

Due to the space limitations, some proofs are omitted or only outlined. The detailed proofs are relegated to the full version of this work \cite{Yigit:ITW:Full}.

\vspace{-4mm}
\subsection{Equivalent Parametrization of $\mathcal{RD}_{\mathrm{CEO}}$}

%In what follows, let $\dv P$ denote the set of joint pmfs $P_{U_1,U_2,X,Y_1,Y_2}$ that satisfy $U_1 \mkv Y_1 \mkv X \mkv Y_2 \mkv U_2$. 
Define the two regions $\mathcal{RD}_{\mathrm{CEO}}^{1}$ and  $\mathcal{RD}_{\mathrm{CEO}}^{2}$ for $i\!=\!1,2$ as
\begin{equation*}
\mathcal{RD}_{\mathrm{CEO}}^{i}=\{ (R_{1},R_{2},D) \::\: D\geq D_{\mathrm{CEO}}^{i}(R_{1},R_{2}) \},
\end{equation*}
with
\vspace{-2mm}
\begin{align}
&D_{\mathrm{CEO}}^{i}(R_{1}, R_{2}) \triangleq \argmin \; H(X|U_{1},U_{2}) \label{eq:RDProblem}\\
&\text{s.t.} \quad 
R_{i} \geq I(Y_{i};U_{i}|U_{i^\mathsf{c}}) \text{ and } R_{i^\mathsf{c}} \geq I(Y_{i^\mathsf{c}};U_{i^\mathsf{c}}),  \nonumber
\end{align}
and the minimization is over set of joint measures $P_{U_1,U_2,X,Y_1,Y_2}$ that satisfy $U_1 \mkv Y_1 \mkv X \mkv Y_2 \mkv U_2$.

As stated in the following proposition, the region $\mathcal{RD}_{\mathrm{CEO}}$ of Theorem~\ref{theo:CEO-theorem} coincides with the convex hull of the union of the two regions $\mathcal{RD}_{\mathrm{CEO}}^{1}$ and  $\mathcal{RD}_{\mathrm{CEO}}^{2}$. 

\begin{proposition}\label{prop:UnionCEO}
The region $\mathcal{RD}_{\mathrm{CEO}}$ is given by
\begin{align}
\mathcal{RD}_{\mathrm{CEO}} = \mathrm{conv} ( \mathcal{RD}_{\mathrm{CEO}}^{1} \cup \mathcal{RD}_{\mathrm{CEO}}^{2} ).\label{eq:RDChull}
\end{align}
\end{proposition}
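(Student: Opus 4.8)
The plan is to prove the two inclusions of~\eqref{eq:RDChull} separately, using two facts: that $\mathcal{RD}_{\mathrm{CEO}}$ is convex, and that for fixed auxiliary random variables the rate constraints of Theorem~\ref{theo:CEO-theorem} describe a contra-polymatroid with at most two corner points. For the inclusion $\mathrm{conv}(\mathcal{RD}_{\mathrm{CEO}}^1\cup\mathcal{RD}_{\mathrm{CEO}}^2)\subseteq\mathcal{RD}_{\mathrm{CEO}}$ it is enough, by convexity, to check $\mathcal{RD}_{\mathrm{CEO}}^i\subseteq\mathcal{RD}_{\mathrm{CEO}}$. Taking $i=1$, let $(R_1,R_2,D)\in\mathcal{RD}_{\mathrm{CEO}}^1$, witnessed by auxiliaries with $U_1\mkv Y_1\mkv X\mkv Y_2\mkv U_2$ and $R_1\ge I(U_1;Y_1|U_2)$, $R_2\ge I(U_2;Y_2)$, $D\ge H(X|U_1,U_2)$. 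Picking $Q$ deterministic in Theorem~\ref{theo:CEO-theorem}, the constraint on $R_1$ is immediate; the sum-rate constraint follows from $I(U_1,U_2;Y_1,Y_2)=I(U_1;Y_1|U_2)+I(U_2;Y_2)$, which holds because $U_2\mkv Y_2\mkv Y_1$ and $U_1\mkv Y_1\mkv (Y_2,U_2)$; and the constraint on $R_2$ follows because $I(U_2;Y_2)\ge I(U_2;Y_2|U_1)$, a consequence of $U_1\mkv Y_2\mkv U_2$. The case $i=2$ is symmetric, and the first inclusion follows.

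For the reverse inclusion, I would first observe that the time-sharing variable $Q$ merely convexifies the region, so $\mathcal{RD}_{\mathrm{CEO}}=\mathrm{conv}(\mathcal{S})$, where $\mathcal{S}$ is the set of tuples satisfying the three constraints of Theorem~\ref{theo:CEO-theorem} with $Q$ deterministic; it then suffices to show $\mathcal{S}\subseteq\mathrm{conv}(\mathcal{RD}_{\mathrm{CEO}}^1\cup\mathcal{RD}_{\mathrm{CEO}}^2)$. Fix auxiliaries with $U_1\mkv Y_1\mkv X\mkv Y_2\mkv U_2$ and set $v_1=I(U_1;Y_1|U_2)$, $v_2=I(U_2;Y_2|U_1)$, $v_1'=I(U_1;Y_1)$, $v_2'=I(U_2;Y_2)$, $v_{12}=I(U_1,U_2;Y_1,Y_2)$, $h=H(X|U_1,U_2)$. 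The corresponding slice of $\mathcal{S}$ is the polyhedron $\{(R_1,R_2,D):R_1\ge v_1,\ R_2\ge v_2,\ R_1+R_2\ge v_{12},\ D\ge h\}$; the Markov-chain identities give $v_{12}=v_1+v_2'=v_1'+v_2$ with $v_1\le v_1'$, $v_2\le v_2'$, so the constraints are supermodular and this polyhedron has exactly the two corner points $(v_1,v_2')$ and $(v_1',v_2)$ in the rate plane, hence equals the convex hull of the two shifted orthants $\{R_1\ge v_1,\ R_2\ge v_2',\ D\ge h\}$ and $\{R_1\ge v_1',\ R_2\ge v_2,\ D\ge h\}$. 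The first orthant is contained in $\mathcal{RD}_{\mathrm{CEO}}^1$: its rate constraints, $R_1\ge v_1$ and $R_2\ge v_2'$, are precisely the feasibility constraints of the minimization~\eqref{eq:RDProblem} for $i=1$, so $(U_1,U_2)$ is feasible there and $D_{\mathrm{CEO}}^1(R_1,R_2)\le H(X|U_1,U_2)=h\le D$; symmetrically the second orthant lies in $\mathcal{RD}_{\mathrm{CEO}}^2$. Taking the union over all admissible auxiliaries yields $\mathcal{S}\subseteq\mathrm{conv}(\mathcal{RD}_{\mathrm{CEO}}^1\cup\mathcal{RD}_{\mathrm{CEO}}^2)$, and passing to convex hulls completes the proof.

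I expect the main obstacle to be the reduction $\mathcal{RD}_{\mathrm{CEO}}=\mathrm{conv}(\mathcal{S})$, i.e.\ pulling the auxiliary $Q$ out into the outer convex hull: I would justify it by writing the $Q$-averaged rate and distortion constraints as a $p_Q$-weighted Minkowski sum of the per-value contra-polymatroids and checking, via their vertex descriptions, that this sum equals the averaged polyhedron. The remaining ingredients are routine: the Markov-chain identities that place the two corner points into the feasible sets of the two problems in~\eqref{eq:RDProblem}, and the elementary polyhedral fact that an upward-closed polyhedron is the convex hull of the orthants anchored at its vertices. Cardinality bounds on $\mathcal{U}_1,\mathcal{U}_2$ are inherited from Theorem~\ref{theo:CEO-theorem}, while the bound $|\mathcal{Q}|\le 4$ is immaterial here since $Q$ has been removed.
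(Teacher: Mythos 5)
Your proof is correct, and it rests on the same structural fact as the paper's (outline of a) proof: for each fixed choice of auxiliaries the rate constraints of Theorem~\ref{theo:CEO-theorem} cut out a contra-polymatroid with exactly two vertices, $\bigl(I(U_1;Y_1|U_2),\,I(U_2;Y_2)\bigr)$ and $\bigl(I(U_1;Y_1),\,I(U_2;Y_2|U_1)\bigr)$, and these two vertices are exactly the anchor points of the feasible sets defining $\mathcal{RD}_{\mathrm{CEO}}^{1}$ and $\mathcal{RD}_{\mathrm{CEO}}^{2}$ in~\eqref{eq:RDProblem}. Where you differ is in how that fact is exploited: the paper argues operationally, citing that each corner point is achieved by a successive Wyner--Ziv strategy and that the tuples achievable by the SWZ scheme with a given decoding order are characterized by $\mathrm{conv}(\mathcal{RD}_{\mathrm{CEO}}^{\pi(1)})$, whereas you stay entirely at the level of the single-letter polyhedra, writing each slice as the convex hull of the two shifted orthants anchored at its vertices (using $v_{12}=v_1+v_2'=v_1'+v_2$) and placing each orthant inside the corresponding $\mathcal{RD}_{\mathrm{CEO}}^{i}$. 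Your route buys self-containedness and also supplies two pieces the paper's outline omits entirely: the easy inclusion $\mathcal{RD}_{\mathrm{CEO}}^{i}\subseteq\mathcal{RD}_{\mathrm{CEO}}$ (via the Markov-chain identity $I(U_1,U_2;Y_1,Y_2)=I(U_1;Y_1|U_2)+I(U_2;Y_2)$ and $I(U_2;Y_2)\geq I(U_2;Y_2|U_1)$), and the explicit elimination of the time-sharing variable $Q$ via the Minkowski-average argument --- a step you correctly flag as the only non-routine reduction, and which does go through for contra-polymatroids since the vertices of the averaged polyhedron are the $p_Q$-averages of the per-$q$ vertices. The paper's version is shorter because it imports the achievability machinery of Courtade--Weissman wholesale; yours is longer but checkable from the statements alone.
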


\begin{proof}
The outline of the proof is as follows. Let $P_{U_1,U_2,X,Y_1,Y_2}$ and $P_Q$ be  such that $(R_1,R_2,D)\in\mathcal{RD}_{\mathrm{CEO}} $. The polytope defined by the rate constraints \eqref{eq:ConstRi}-\eqref{eq:ConstRsum}, denoted by $\mc V$, forms a contra-polymatroid with $2!$ extreme points (vertices) \cite{Weissman,Chen:2008:TIT}. Given a permutation $\pi$ on $\{1,2\}$, the tuple
\begin{align}
\tilde{R}_{\pi(1)} = I(Y_{\pi(1)};U_{\pi(1)}),\; \tilde{R}_{\pi(2)} = I(Y_{\pi(2)};U_{\pi(2)}|U_{\pi(1)}), \nonumber
\end{align}
defines an extreme point  of $\mc V$ for each permutation. As shown in \cite{Weissman}, for every extreme point $(\tilde{R}_1,\tilde{R}_2)$ of $\mc V$, the point $(\tilde{R}_1,\tilde{R}_2,D)$ is achieved by time-sharing two successive Wyner-Ziv (SWZ) strategies. The set of  achievable tuples with such SWZ scheme is characterized by the convex hull of  $\mathcal{RD}_{\mathrm{CEO}}^{\pi(1)}$. 
Convexifying the union of both regions as in~\eqref{eq:RDChull}, we obtain the original RD region $\mc R_{\mathrm{CEO}}$.
\end{proof}
The main advantage of Proposition~\ref{prop:UnionCEO} it that it reduces the computation of region $\mc {RD}_{\mathrm{CEO}}$  to the computation of the two regions $\mc {RD}_{\mathrm{CEO}}^{i}$, $i=1,2$, whose boundary can be efficiently parameterized, leading to an efficient computational method. In what follows, we concentrate on  $\mathcal{RD}_{\mathrm{CEO}}^{1}$. The computation of $\mathcal{RD}_{\mathrm{CEO}}^{2}$ follows similarly, and is omitted  for brevity. 
Next theorem provides a parameterization of the boundary tuples of the region  $\mathcal{RD}_{\mathrm{CEO}}^{1}$ in terms, each of them, of an optimization problem over the pmfs $\dv P \triangleq \{P_{U_1|Y_1},P_{U_2|Y_2}\}$.
\begin{theorem}
\label{theo:CEO-param}
For each $\dv s \triangleq [s_1, s_2]$, $s_1 > 0$, $s_2 > 0$, define a rate-distortion tuple $(R_{1,\dv s}, R_{2,\dv s} ,D_{\dv s})$ parametrically given by 
\begin{align}
&D_{\dv s} = -s_1R_{1,\dv s}-s_2R_{2, \dv s}+ \min_{\dv P}F_{\dv s}(\dv P), \label{eq:Dparam} \\ 
&R_{1, \dv s} = I(Y_1;U_1\str|U_2\str), \quad R_{2, \dv s} = I(Y_2;U_2\str), \label{eq:Rparam}
\end{align}
where $F_{\dv s}(\dv P)$ is given in \eqref{eq:CEO_lag1};  $\dv P\str$ are the conditional pmfs yielding the minimum in \eqref{eq:Dparam} and $U_1^*, U_2^*$ are the auxiliary variables induced by $\dv P^*$. Then, we have: 
\begin{enumerate}
\item Each value of  $\dv s$ leads to a tuple $(R_{1,\dv s}, R_{2,\dv s}, D_{\dv s})$ on the distortion-rate curve
$D_{\dv s} = D_{\mathrm{CEO}}^{1}(R_{1,\dv s}, R_{2,\dv s})$.

\item For every point on the distortion-rate curve, there is an $\dv s$ for which \eqref{eq:Dparam} and \eqref{eq:Rparam} hold.
\end{enumerate}
\end{theorem}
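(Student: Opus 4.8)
The plan is to read \eqref{eq:Dparam}--\eqref{eq:Rparam} as the Lagrangian (support-function) parametrization of the distortion--rate surface $D=D_{\mathrm{CEO}}^{1}(R_{1},R_{2})$, proving item~1 by a direct optimality argument and item~2 by a supporting-hyperplane argument. First I would put $F_{\dv s}(\dv P)$ into the explicit pmf form \eqref{eq:CEO_lag1}: expanding $H(X|U_{1},U_{2})+s_{1}I(Y_{1};U_{1}|U_{2})+s_{2}I(Y_{2};U_{2})$ and using the constraint chain $U_{1}\mkv Y_{1}\mkv X\mkv Y_{2}\mkv U_{2}$ --- in particular $U_{1}\mkv Y_{1}\mkv U_{2}$, hence $I(Y_{1};U_{1}|U_{2})=I(U_{1};Y_{1})-I(U_{1};U_{2})$ --- one gets an expression that, since the source law $p(x)p(y_{1}|x)p(y_{2}|x)$ is fixed, depends only on $\dv P=\{P_{U_{1}|Y_{1}},P_{U_{2}|Y_{2}}\}$. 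The crucial reformulation is
\[
\min_{\dv P}F_{\dv s}(\dv P)\;=\;\min_{(R_{1},R_{2},D)\in\mathcal{RD}_{\mathrm{CEO}}^{1}}\big(D+s_{1}R_{1}+s_{2}R_{2}\big),
\]
which I would prove by two inclusions: (i) every admissible $\dv P$ induces the triple $\big(I(Y_{1};U_{1}|U_{2}),I(Y_{2};U_{2}),H(X|U_{1},U_{2})\big)$, which is feasible in \eqref{eq:RDProblem} and hence lies in $\mathcal{RD}_{\mathrm{CEO}}^{1}$, so the right-hand side is at most the left; (ii) any $(R_{1},R_{2},D)\in\mathcal{RD}_{\mathrm{CEO}}^{1}$ dominates, coordinatewise, the triple induced by the $\dv P$ attaining $D_{\mathrm{CEO}}^{1}(R_{1},R_{2})$, and since $s_{1},s_{2}>0$ this gives the reverse inequality.

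For item~1, fix $\dv s$ and let $\dv P\str$ attain the minimum, with induced $U_{1}\str,U_{2}\str$. By \eqref{eq:Dparam}, $D_{\dv s}=F_{\dv s}(\dv P\str)-s_{1}R_{1,\dv s}-s_{2}R_{2,\dv s}=H(X|U_{1}\str,U_{2}\str)$, and $\dv P\str$ is feasible for \eqref{eq:RDProblem} at rates $(R_{1,\dv s},R_{2,\dv s})$, so $D_{\mathrm{CEO}}^{1}(R_{1,\dv s},R_{2,\dv s})\le D_{\dv s}$. Conversely, if $\dv P^{\circ}$ attains $D_{\mathrm{CEO}}^{1}(R_{1,\dv s},R_{2,\dv s})$ then $F_{\dv s}(\dv P^{\circ})\le H(X|U_{1}^{\circ},U_{2}^{\circ})+s_{1}R_{1,\dv s}+s_{2}R_{2,\dv s}$, while optimality of $\dv P\str$ gives $F_{\dv s}(\dv P^{\circ})\ge D_{\dv s}+s_{1}R_{1,\dv s}+s_{2}R_{2,\dv s}$; cancelling the rate terms yields $D_{\dv s}\le D_{\mathrm{CEO}}^{1}(R_{1,\dv s},R_{2,\dv s})$, hence equality and item~1.

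For item~2, take a point with $D^{\circ}=D_{\mathrm{CEO}}^{1}(R_{1}^{\circ},R_{2}^{\circ})$. Using convexity of the surface together with the fact that $D_{\mathrm{CEO}}^{1}$ is non-increasing, I would obtain a supporting hyperplane to $\mathcal{RD}_{\mathrm{CEO}}^{1}$ at $(R_{1}^{\circ},R_{2}^{\circ},D^{\circ})$ with inner normal $(s_{1},s_{2},1)$, where $s_{1},s_{2}>0$ on the strictly decreasing portion of the curve. Then $(R_{1}^{\circ},R_{2}^{\circ},D^{\circ})$ minimizes $D+s_{1}R_{1}+s_{2}R_{2}$ over $\mathcal{RD}_{\mathrm{CEO}}^{1}$, and by the reformulation this common value equals $\min_{\dv P}F_{\dv s}(\dv P)$. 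Moreover, since $s_{1},s_{2}>0$, any $\dv P$ attaining $D_{\mathrm{CEO}}^{1}(R_{1}^{\circ},R_{2}^{\circ})$ --- whose induced triple is $\le(R_{1}^{\circ},R_{2}^{\circ},D^{\circ})$ coordinatewise yet has the same weighted value --- must have its induced triple equal $(R_{1}^{\circ},R_{2}^{\circ},D^{\circ})$ exactly, and is therefore itself a minimizer of $F_{\dv s}$. Thus \eqref{eq:Dparam}--\eqref{eq:Rparam} hold for this $\dv s$.

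The main obstacle is the convexity invoked in item~2: the naive time-sharing argument does not go through, because a shared time-sharing variable would correlate $U_{1}$ and $U_{2}$ and so violate the constraint $U_{1}\mkv Y_{1}\mkv X\mkv Y_{2}\mkv U_{2}$ --- which is precisely why Proposition~\ref{prop:UnionCEO} has to convexify the union. I would therefore either argue convexity of $D_{\mathrm{CEO}}^{1}$ directly, or read the ``distortion--rate curve'' in item~2 as the lower boundary of $\mathrm{conv}(\mathcal{RD}_{\mathrm{CEO}}^{1})$, on which the supporting-hyperplane step is unconditionally valid and which is all that is needed to reconstruct $\mathcal{RD}_{\mathrm{CEO}}$ via Proposition~\ref{prop:UnionCEO}. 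The remaining technicalities --- degenerate multipliers $s_{i}\downarrow 0$ at the extremities of the curve --- are handled by routine limiting arguments.
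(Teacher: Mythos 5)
Your proof of item~1 follows the paper's own route almost verbatim, but states it more carefully: the paper's converse chain writes an \emph{equality} $H(X|U_1\str,U_2\str)+s_1(I(Y_1;U_1\str|U_2\str)-R_1)+s_2(I(Y_2;U_2\str)-R_2)=D_{\dv s}+s_1(R_{1,\dv s}-R_1)+s_2(R_{2,\dv s}-R_2)$ for the minimizer of \eqref{eq:RDProblem}, which is really the inequality $F_{\dv s}(\dv P^{\circ})\ge\min_{\dv P}F_{\dv s}(\dv P)$ that you invoke; your version is the correct reading and gives the same conclusion. Where you genuinely diverge is item~2. The paper derives the supporting-line inequality $D_{\mathrm{CEO}}^{1}(R_1,R_2)\ge D_{\dv s}+s_1(R_{1,\dv s}-R_1)+s_2(R_{2,\dv s}-R_2)$ for every $(R_1,R_2)$ and every $\dv s$, then simply asserts that statements 1) and 2) both follow; it never exhibits, for a \emph{given} boundary point, the multiplier $\dv s$ whose tangent plane touches there. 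You supply exactly the missing machinery: the identification $\min_{\dv P}F_{\dv s}(\dv P)=\min_{(R_1,R_2,D)\in\mathcal{RD}_{\mathrm{CEO}}^{1}}(D+s_1R_1+s_2R_2)$, proved as a separate two-inclusion lemma, plus the existence of a supporting hyperplane with strictly positive normal components. Your observation that this step needs convexity of $D_{\mathrm{CEO}}^{1}$, and that the usual time-sharing proof of convexity fails because a common time-sharing variable would correlate $U_1$ and $U_2$ in violation of $U_1\mkv Y_1\mkv X\mkv Y_2\mkv U_2$, identifies a real gap the paper is silent about; your fix of reading the ``distortion-rate curve'' in item~2 as the lower boundary of $\mathrm{conv}(\mathcal{RD}_{\mathrm{CEO}}^{1})$ (which is all Proposition~\ref{prop:UnionCEO} requires, since it convexifies anyway) is the right way to make the statement literally true. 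In short: same Lagrangian-parametrization strategy, essentially identical for item~1, but your item~2 is a complete argument where the paper's is only a sketch.
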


\begin{proof}
Suppose that $\dv P\str$ yields the minimum in \eqref{eq:Dparam}. For this $\dv P$ we have $ I(Y_1;U_1|U_2) = R_{1, \dv s}$ and $I(Y_2;U_2) = R_{2, \dv s}$. Then, 
\addtocounter{equation}{1} 
\begin{align}
D_{\dv s} &= -s_1R_{1,\dv s}-s_2R_{2, \dv s}+F_{\dv s}(\dv P\str) \nonumber\\
& = -s_1R_{1,\dv s}-s_2R_{2, \dv s} + [H(X|U_1\str,U_2\str) + s_1R_{1,\dv s} + s_2R_{2, \dv s}] \nonumber\\
&= H(X|U_1\str,U_2\str) \geq D_{\mathrm{CEO}}^{1}(R_{1,\dv s}, R_{2,\dv s}). \label{eq:part1}
\end{align}
Conversely, if $\dv P\str$ is the solution to the minimization in \eqref{eq:RDProblem}, then $I(Y_1;U_1\str|U_2\str)\leq R_{1}$ and $I(Y_2;U_2\str)\leq R_{2}$ and  for any $\dv s$, 
\begin{align*}
D_{\mathrm{CEO}}^{1}(R_1,R_2) =& H(X|U_1\str,U_2\str)\\
\geq& H(X|U_1\str,U_2\str) + s_1(I(Y_1;U_1\str|U_2\str)- R_{1} ) \\
&+ s_2(I(Y_2;U_2\str)- R_{2})\\
=& D_{\dv s} + s_1( R_{1, \dv s}- R_{1} ) + s_2( R_{2, \dv s}- R_{2} ).
\end{align*}
Given $\dv s$, and hence $(R_{1,\dv s},R_{2,\dv s}, D_{\dv s})$, letting $(R_1,R_2) = (R_{1,\dv s}, R_{2,\dv s})$ yields
$D_{\mathrm{CEO}}^{1}(R_{1,\dv s}, R_{2,\dv s}) \geq D_{\dv s}$, which proves, together with \eqref{eq:part1},  statement 1) and 2).
\end{proof}

\vspace{-1.5em}
\subsection{An iterative algorithm to compute $\mathcal{RD}_{\mathrm{CEO}}^{1}$}
\label{sec:algo}
In this section, we derive an algorithm to solve \eqref{eq:Dparam} for a given parameter value $\dv s$. To that end, we express the optimization in \eqref{eq:Dparam} as a minimization of a function $F_{\dv s}(\dv P, \dv Q)$, given in \eqref{eq:CEO_lag2}, over $\dv P$ and some auxiliary pmfs $\dv Q$, defined as $\mathbf{Q} \triangleq \{ Q_{X|U_1,U_2}, Q_{U_1,U_2}, Q_{U_2} \}$. We have the following result.  

\begin{proposition}
\label{lemma:CEO-double}
For each $\dv s \triangleq [s_1, s_2]$, $s_1 > 0$, $s_2 > 0$, the rate-distortion tuple $(D_{\dv s}, R_{1,\dv s}, R_{2,\dv s})$ is given by   
\addtocounter{equation}{1} 
\begin{align}
D_{\dv s} &= -s_1R_{1,\dv s}-s_2R_{2, \dv s} + \argmin_{\dv P} \argmin_{\dv Q} F_{\dv s}(\dv P, \dv Q), \label{eq:CEO-double} 
\end{align}
where $R_{1,\dv s}$ and $R_{2, \dv s}$ are given in \eqref{eq:Rparam} and $\dv P\str$ are the conditional pmfs yielding the minimum in \eqref{eq:Dparam}. 
\end{proposition}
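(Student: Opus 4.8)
The plan is to show that, for every fixed $\dv P$, the inner minimization $\min_{\dv Q}F_{\dv s}(\dv P,\dv Q)$ recovers exactly the function $F_{\dv s}(\dv P)$ defined in \eqref{eq:CEO_lag1}; the proposition then follows immediately by taking the outer minimum over $\dv P$ and invoking Theorem~\ref{theo:CEO-param}.

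First I would observe that $F_{\dv s}(\dv P,\dv Q)$ in \eqref{eq:CEO_lag2} is obtained from \eqref{eq:CEO_lag1} simply by replacing the three pmfs $p(x|u_1,u_2)$, $p(u_1,u_2)$, $p(u_2)$ induced by $\dv P$ (through the Markov chain $U_1 \mkv Y_1 \mkv X \mkv Y_2 \mkv U_2$) with free auxiliary pmfs $q(x|u_1,u_2)$, $q(u_1,u_2)$, $q(u_2)$ collected in $\dv Q$, while the remaining summands (those containing $p(u_i|y_i)$) are left untouched. Since each component of $\dv Q$ appears in exactly one summand, the minimization over $\dv Q$ decouples into three independent problems, each of the generic form $\min_{q}\,-\sum_{a}r(a)\log q(a)$ (with an additional nonnegative outer weighting $p(u_1,u_2)$ in the first term). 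By the nonnegativity of relative entropy (Gibbs' inequality), writing $-\sum_a r(a)\log q(a)=H(r)+D(r\|q)$ shows that each such problem is solved by taking $q$ equal to the corresponding $\dv P$-induced distribution, i.e. $q\str(x|u_1,u_2)=p(x|u_1,u_2)$, $q\str(u_1,u_2)=p(u_1,u_2)$, $q\str(u_2)=p(u_2)$, with attained value the associated (conditional) entropy $H(X|U_1,U_2)$, $H(U_1,U_2)$, $H(U_2)$ respectively. Substituting these minimizers back converts the three $\dv Q$-dependent summands of \eqref{eq:CEO_lag2} into precisely the first three summands of \eqref{eq:CEO_lag1}, so $\min_{\dv Q}F_{\dv s}(\dv P,\dv Q)=F_{\dv s}(\dv P)$ for every $\dv P$.

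Taking $\min_{\dv P}$ on both sides gives $\min_{\dv P}\min_{\dv Q}F_{\dv s}(\dv P,\dv Q)=\min_{\dv P}F_{\dv s}(\dv P)$, achieved at the same $\dv P\str$; plugging this into \eqref{eq:Dparam} of Theorem~\ref{theo:CEO-param} yields \eqref{eq:CEO-double}, with $R_{1,\dv s}$ and $R_{2,\dv s}$ still given by \eqref{eq:Rparam} evaluated at $\dv P\str$.

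The only point needing care — and the main, though minor, obstacle — is that the pmfs in $\dv Q$ are left entirely unconstrained: $q(u_1,u_2)$ is not required to be the $X$-marginal of $q(x|u_1,u_2)$, nor consistent with $q(u_2)$, nor to arise from any channel. One must check that this extra freedom cannot push the optimum below $F_{\dv s}(\dv P)$, which is exactly what the decoupling argument establishes — the three unconstrained minimizers turn out to be mutually consistent and equal to the $\dv P$-induced distributions, so relaxing the constraints buys nothing. A secondary bookkeeping check, which is standard, is that the substitution is valid on the support of the relevant marginals (terms with vanishing $p(\cdot)$ contribute $0$ and impose no constraint on the matching $q(\cdot)$).
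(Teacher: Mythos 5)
Your argument is correct and is essentially the paper's own: the paper proves this proposition by citing Theorem~\ref{theo:CEO-param} together with Lemma~\ref{lemma:CEO-fixed-p}, whose proof writes $F_{\dv s}(\dv P,\dv Q)-F_{\dv s}(\dv P)$ as a nonnegative sum of (weighted) KL divergences — exactly your decoupled Gibbs-inequality computation, including the observation that the unconstrained minimizers are the $\dv P$-induced distributions. Your added remarks about the consistency of the unconstrained $\dv Q$ and the support bookkeeping are sensible but do not change the substance.
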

\begin{proof}
Follows from Theorem~\ref{theo:CEO-param} and Lemma~\ref{lemma:CEO-fixed-p} below.
\end{proof}

Motivated by the BA algorithm \cite{Blahut}, we propose an alternate optimization procedure over the set of pmfs $\dv P$ and $\dv Q$ as shown in Algorithm~\ref{algo:CEO}. The steps in the algorithm are derived from the following lemmas. 
   
\begin{lemma}
\label{lemma:CEO-convex}
$F_{\dv s}(\dv P, \dv Q)$ is convex in $\dv P$ and convex in $\dv Q$.
\end{lemma}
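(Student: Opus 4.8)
The plan is to establish convexity separately in each block of variables by direct inspection of the explicit expression \eqref{eq:CEO_lag2}. First I would fix $\dv Q$ and regard $F_{\dv s}(\dv P,\dv Q)$ as a function of $\dv P=\{P_{U_1|Y_1},P_{U_2|Y_2}\}$. Here the key observation is that all the quantities $p(u_1,u_2,x)$, $p(u_1,u_2)$, $p(u_2)$ appearing in the first line of \eqref{eq:CEO_lag2}, multiplied against the fixed logarithms of $\dv Q$, are affine (linear) functions of $\dv P$: indeed, using the Markov structure $U_1\mkv Y_1\mkv X\mkv Y_2\mkv U_2$ and the fixed source law, each of these marginals is obtained from $p(u_1|y_1)$ and $p(u_2|y_2)$ by summing against fixed kernels, and crucially no product $p(u_1|y_1)p(u_2|y_2)$ of two optimization variables appears because $X$ separates $U_1$ from $U_2$. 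Hence the first line is linear in $\dv P$. The remaining terms $s_1\sum_{u_1y_1}p(u_1|y_1)p(y_1)\log p(u_1|y_1)$ and $s_2\sum_{u_2y_2}p(u_2|y_2)p(y_2)\log p(u_2|y_2)$ are, up to the positive constants $s_ip(y_i)$, negative entropy–type terms $x\log x$ in the single variable $p(u_i|y_i)$, which are convex; and $s_1\sum_{u_2}p(u_2)\log p(u_2)$ is again an $x\log x$ term composed with a linear map in $p(u_2|y_2)$, hence convex. A sum of linear and convex terms is convex, so $F_{\dv s}(\cdot,\dv Q)$ is convex in $\dv P$.

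Next I would fix $\dv P$ and treat $F_{\dv s}(\dv P,\dv Q)$ as a function of $\dv Q=\{Q_{X|U_1,U_2},Q_{U_1,U_2},Q_{U_2}\}$. Now the last line of \eqref{eq:CEO_lag2} does not depend on $\dv Q$ at all, so it is an additive constant. The first line is a nonnegative combination (the coefficients $p(u_1,u_2,x)$, $s_1p(u_1,u_2)$, $s_2p(u_2)$ are fixed and nonnegative once $\dv P$ is fixed) of terms of the form $-\log q(\cdot)$, each of which is convex in the corresponding component of $\dv Q$; moreover these three families of terms involve disjoint blocks $Q_{X|U_1,U_2}$, $Q_{U_1,U_2}$, $Q_{U_2}$ of $\dv Q$, so the sum is jointly convex in $\dv Q$. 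Therefore $F_{\dv s}(\dv P,\cdot)$ is convex in $\dv Q$.

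The steps, in order, are: (i) expand $p(u_1,u_2,x)$, $p(u_1,u_2)$, $p(u_2)$ in terms of $\dv P$ and the fixed joint law and verify linearity, using the Markov chain to rule out cross terms; (ii) identify the entropy-type terms in $\dv P$ and invoke convexity of $t\mapsto t\log t$ together with composition with affine maps; (iii) conclude convexity in $\dv P$; (iv) for the $\dv Q$ direction, note the $\dv Q$-independence of the last line, the nonnegativity of the coefficients, convexity of $-\log$, and the block-separability of the $\dv Q$ components; (v) conclude convexity in $\dv Q$. I expect the main obstacle to be step (i): one must be careful that $F_{\dv s}$ is genuinely \emph{not} jointly convex in $\dv P$ (because $p(u_1|y_1)\log p(u_1,u_2)$-type couplings would be problematic), and that the claimed linearity in $\dv P$ really does hold term-by-term — this hinges on writing, e.g., $p(u_1,u_2)=\sum_{x,y_1,y_2}p(x)p(y_1|x)p(y_2|x)p(u_1|y_1)p(u_2|y_2)$ and observing that within the sum against the fixed kernel the two unknowns enter different summation indices, so after fixing one of the two blocks the expression is affine in the other; verifying that this reading is consistent with the way $F_{\dv s}$ was derived in \eqref{eq:CEO_lag1}–\eqref{eq:CEO_lag2} is the delicate bookkeeping point, and the rest is routine.
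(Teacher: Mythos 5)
Your treatment of the $\dv Q$ direction is correct and complete: with $\dv P$ fixed, the last line of \eqref{eq:CEO_lag2} is constant, and the first line is a nonnegative combination of $-\log q(\cdot)$ terms acting on disjoint blocks of $\dv Q$, hence convex. Since the paper's own proof is only the one-liner ``follows from the log-sum inequality,'' a direct term-by-term inspection is a perfectly reasonable alternative route in principle.

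The gap is in the $\dv P$ direction, at precisely the step you yourself flagged as delicate. The claim that $p(u_1,u_2,x)$ and $p(u_1,u_2)$ are \emph{affine} in $\dv P=\{P_{U_1|Y_1},P_{U_2|Y_2}\}$ ``because $X$ separates $U_1$ from $U_2$'' is false; the Markov chain $U_1\mkv Y_1\mkv X\mkv Y_2\mkv U_2$ yields $p(u_1,u_2,x)=p(x)\,p(u_1|x)\,p(u_2|x)$ with $p(u_i|x)=\sum_{y_i}p(y_i|x)p(u_i|y_i)$, so these marginals are \emph{bilinear} in the pair of conditionals. Your own expansion $p(u_1,u_2)=\sum_{x,y_1,y_2}p(x)p(y_1|x)p(y_2|x)p(u_1|y_1)p(u_2|y_2)$ exhibits the product $p(u_1|y_1)p(u_2|y_2)$ explicitly; the fact that the two unknowns carry different summation indices does not make a bilinear form affine. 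Consequently the first two terms of \eqref{eq:CEO_lag2} contribute a cross-block Hessian of the form $\bigl(\begin{smallmatrix}0&C\\ C^{\mathsf T}&0\end{smallmatrix}\bigr)$, which is indefinite whenever $C\neq 0$, and the ``sum of linear plus convex'' conclusion collapses. What your inspection actually establishes is convexity of $F_{\dv s}(\cdot,\dv Q)$ in each block $P_{U_1|Y_1}$ and $P_{U_2|Y_2}$ \emph{separately} (each term is affine, or of $t\log t$ type composed with an affine map, in the active block once the other is frozen), i.e.\ biconvexity over $\dv P$ --- which is what the alternating updates of Lemma~\ref{lemma:CEO-fixed-q} actually use --- but not joint convexity over $\dv P$ as the statement, read literally, asserts. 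To close the gap you must either restrict the claim to per-block convexity or produce a genuinely different argument for the joint statement; the linearity-based route cannot deliver it.
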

\vspace{-3mm}
\begin{proof}
Follows from the log-sum inequality.   
\end{proof}	
	\begin{lemma}
\label{lemma:CEO-fixed-p}
For fixed $\dv P$, there exists a unique $\dv Q$ that achieves the minimum $\argmin_{\dv Q} F_{\dv s}(\dv P, \dv Q) = F_{\dv s}(\dv P)$, given by 
\begin{equation}	
\label{eq:CEO-Q}	
Q_{X|U_1,U_2} = P_{X|U_1,U_2}, \:\: Q_{U_1,U_2} = P_{U_1,U_2}, \:\: Q_{U_2} = P_{U_2}.
\end{equation}	
\end{lemma}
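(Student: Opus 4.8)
The plan is to show directly that $F_{\dv s}(\dv P,\dv Q)\ge F_{\dv s}(\dv P)$ for every admissible $\dv Q$, with equality precisely at the claimed choice. First I would subtract \eqref{eq:CEO_lag1} from \eqref{eq:CEO_lag2}: the last three summands of each expression (those involving $p(u_1|y_1)$, $p(u_2|y_2)$, and the stray $s_1\sum_{u_2}p(u_2)\log p(u_2)$) depend only on $\dv P$ and are literally identical in the two displays, so they cancel, leaving
\begin{align*}
F_{\dv s}(\dv P,\dv Q)-F_{\dv s}(\dv P)
= &-\sum\nolimits_{u_1u_2x} p(u_1,u_2,x)\log\frac{q(x|u_1,u_2)}{p(x|u_1,u_2)}\\
 &- s_1\sum\nolimits_{u_1u_2} p(u_1,u_2)\log\frac{q(u_1,u_2)}{p(u_1,u_2)}\\
 &- s_2\sum\nolimits_{u_2} p(u_2)\log\frac{q(u_2)}{p(u_2)},
\end{align*}
where $p(u_1,u_2,x)$, $p(u_1,u_2)$, $p(u_2)$ are the marginals of the joint law obtained from the fixed source $P_{X,Y_1,Y_2}$, the test channels $\dv P=\{P_{U_1|Y_1},P_{U_2|Y_2}\}$, and the Markov chain $U_1\mkv Y_1\mkv X\mkv Y_2\mkv U_2$.

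Next I would rewrite each of the three terms as a relative entropy. Using $p(u_1,u_2,x)=p(u_1,u_2)\,p(x|u_1,u_2)$, the first term equals $\sum_{u_1,u_2}p(u_1,u_2)\,D\big(P_{X|U_1,U_2}\,\|\,Q_{X|U_1,U_2}\big)$ (relative entropy of the conditional laws given each $(u_1,u_2)$), the second equals $s_1\,D(P_{U_1,U_2}\|Q_{U_1,U_2})$, and the third equals $s_2\,D(P_{U_2}\|Q_{U_2})$. By the log-sum inequality each of these is nonnegative, and since $s_1>0$ and $s_2>0$ their sum is nonnegative; hence $F_{\dv s}(\dv P,\dv Q)\ge F_{\dv s}(\dv P)$ for every $\dv Q$, and in particular $\min_{\dv Q}F_{\dv s}(\dv P,\dv Q)\ge F_{\dv s}(\dv P)$.

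Finally, the equality case of the log-sum inequality, applied term by term, shows that the difference vanishes exactly when $Q_{U_2}=P_{U_2}$, $Q_{U_1,U_2}=P_{U_1,U_2}$, and $Q_{X|U_1,U_2}=P_{X|U_1,U_2}$ for every $(u_1,u_2)$ with $p(u_1,u_2)>0$; this is precisely \eqref{eq:CEO-Q}, and substituting it back gives $F_{\dv s}(\dv P,\dv Q)=F_{\dv s}(\dv P)$, so the bound is attained. Uniqueness follows from strict convexity of $q\mapsto D(p\|q)$ in the coordinates on which $p$ is supported, together with the convexity in $\dv Q$ already recorded in Lemma~\ref{lemma:CEO-convex}; the only caveat — which I would state explicitly — is that $\dv Q$ is pinned down only on the supports of the corresponding $\dv P$-marginals, the values on the null sets being immaterial. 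There is no genuinely hard step here: the argument is a bookkeeping of which terms cancel followed by Gibbs' inequality, so the one point requiring care is to verify that the $\dv P$-only terms in \eqref{eq:CEO_lag1} and \eqref{eq:CEO_lag2} are indeed the same and that the remaining difference is a sum of relative entropies with positive weights.
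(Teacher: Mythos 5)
Your proof is correct and takes essentially the same route as the paper's: the paper likewise writes $F_{\dv s}(\dv P,\dv Q)-F_{\dv s}(\dv P)$ as $\sum_{u_1u_2}p(u_1,u_2)D_{\mathrm{KL}}(p(x|u_1,u_2)\|q(x|u_1,u_2))+s_1D_{\mathrm{KL}}(p(u_1,u_2)\|q(u_1,u_2))+s_2D_{\mathrm{KL}}(p(u_2)\|q(u_2))\ge 0$ with equality iff \eqref{eq:CEO-Q} holds. Your added remarks on the cancellation of the $\dv P$-only terms and on uniqueness holding only on the support of the $\dv P$-marginals are sensible refinements of the same argument, not a different approach.
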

\begin{proof}
The proof follows from the relation
\begin{align}	
&F_{\dv s}(\dv P, \dv Q) - F_{\dv s}(\dv P) \nonumber\\
%& =&&\hspace{-7.8em} \sum\nolimits_{u_{1}u_{2}} p(u_{1},u_{2}) \sum\nolimits_{x} p(x|u_{1},u_{2}) \log \frac{p(x|u_{1},u_{2})}{q(x|u_{1},u_{2})} \nonumber\\ 
%& &&\hspace{-7.8em} \!\!+ s_{1} \sum_{u_{1}u_{2}} p(u_{1},u_{2}) \log \frac{p(u_{1},u_{2})}{q(u_{1},u_{2})} + s_{2} \sum_{u_{2}} p(u_{2}) \log \frac{p(u_{2})}{q(u_{2})}
%\nonumber\\
& =&&\hspace{-7.8em} \sum\nolimits_{u_{1}u_{2}} p(u_{1},u_{2}) D_{\mathrm{KL}}(p(x|u_{1},u_{2})||q(x|u_{1},u_{2})) \nonumber\\ 
& &&\hspace{-7.8em} \!\! + \! s_{1} D_{\mathrm{KL}}(p(u_{1},u_{2})||q(u_{1},u_{2})) \! + \! s_{2} D_{\mathrm{KL}}(p(u_{2})||q(u_{2})) \geq  0, \nonumber
\end{align}	
where equality holds if and only if \eqref{eq:CEO-Q} is satisfied. 
\end{proof}
\vspace{-2mm}
\begin{figure*}[!b]
\hrulefill
\vspace{-0.2em}
\small
\setcounter{MYtempeqncnt}{\value{equation}}
\setcounter{equation}{17}
\begin{align}
F_{\boldsymbol{\beta}}(\dv P,\dv Q) \triangleq&	\: s_{1} \sum_{u_{1}y_{1}} p(u_{1}|y_{1}) p(y_{1}) \log p(u_{1}|y_{1}) + s_{2} \sum_{u_{2}y_{2}} p(u_{2}|y_{2}) p(y_{2}) \log p(u_{2}|y_{2}) + s_{1} \sum_{u_{2}} p(u_{2}) \log p(u_{2}) - s_{2} \sum_{u_{2}} p(u_{2}) \log q(u_{2})  \nonumber\\
& - \alpha \sum_{u_{1}u_{2}y_{1}} p(u_{1},u_{2},y_{1}) \log q(y_{1}|u_{1},u_{2}) - \bar{\alpha} \sum_{u_{1}u_{2}y_{2}} p(u_{1},u_{2},y_{2}) \log q(y_{2}|u_{1},u_{2})  - s_{1} \sum_{u_{1}u_{2}} p(u_{1},u_{2}) \log q(u_{1},u_{2}). \label{eq:MT_lag}
\end{align}
\setcounter{equation}{\value{MYtempeqncnt}}
\vspace{-0.2em}
\hrulefill	
\setcounter{MYtempeqncnt}{\value{equation}}
\setcounter{equation}{20}
\scriptsize
\begin{align}
\mu_{2}(u_{2},y_{2}) \triangleq& \:  \frac{\alpha}{s_{2}} \sum_{u_{1}y_{1}} p(y_{1}|y_{2}) p(u_{1}|y_{1}) \log q(y_{1}|u_{1},u_{2}) + \frac{\bar{\alpha}}{s_{2}} \sum_{u_{1}} p(u_{1}|y_{2}) \log q(y_{2}|u_{1},u_{2}) 
+ \frac{s_{1}}{s_{2}} \sum_{u_{1}} p(u_{1}|y_{2}) \log q(u_{1},u_{2}) + \log q(u_{2}) - \frac{s_{1}}{s_{2}} \!\log p(u_{2}) \nonumber, \\ 
\mu_{1}(u_{1},y_{1}) \triangleq& \: \frac{\alpha}{s_{1}} \sum_{u_{2}} p(u_{2}|y_{1}) \log q(y_{1}|u_{1},u_{2}) + \frac{\bar{\alpha}}{s_{1}} \sum_{u_{2}y_{2}} p(y_{2}|y_{1}) p(u_{2}|y_{2}) \log q(y_{2}|u_{1},u_{2}) + \sum_{u_{2}} p(u_{2}|y_{1}) \log q(u_{1},u_{2}). \label{eq:rho_mt}   
\end{align}
\setcounter{equation}{\value{MYtempeqncnt}}
\end{figure*}

\begin{algorithm}[t!]
\caption{BA-type algorithm to compute $\mathcal{RD}_{\mathrm{CEO}}^{1}$}
\label{algo:CEO}
\begin{algorithmic}[1]
\smallskip
\Inputs{$P_{X,Y_1,Y_2}$, parameters $\dv s$.}
\Outputs{$P_{U_1|Y_1}\str$, $P_{U_2|Y_2}\str$; $(D_{\dv s}, R_{1,\dv s}, R_{2,\dv s})$.}
\Initialize{Set $n=0$. Choose $\dv P^{(0)}$ randomly.\\ Calculate $\dv Q^{(0)}$ by applying steps \ref{algo:ceo-step2} and \ref{algo:ceo-step3}.}
\Repeat 
\State $n \leftarrow n+1$.
\State \label{algo:ceo-step2} Update $\dv P^{(n)}$ by using \eqref{eq:CEO-p}.
\State Update the following pmfs using $\dv P^{(n)}$. 
{\small
\begin{equation*}
%\label{eq:CEO-prob}
\begin{aligned}
p^{(n)}(u_{i}|x) &= \sum \nolimits_{y_{i}} p^{(n)}(u_{i}|y_{i}) p(y_{i}|x), \: &&\hspace{-1.3em} i=1,2,\\
p^{(n)}(u_{i}) &= \sum \nolimits_{y_{i}} p(y_{i}) p^{(n)}(u_{i}|y_{i}), \: &&\hspace{-1.3em} i=1,2,\\
p^{(n)}(u_{1},u_{2},x) &= p(x) p^{(n)}(u_{1}|x) p^{(n)}(u_{2}|x),\\
p^{(n)}(u_{1},u_{2}) &= \sum \nolimits_{x} p^{(n)}(u_{1},u_{2},x).\\  
\end{aligned}
\end{equation*}	
}%
\State \label{algo:ceo-step3} Update $\dv Q^{(n)}$ by using \eqref{eq:CEO-Q}.
\Until{convergence.}
\end{algorithmic}
\end{algorithm}

\begin{lemma}
\label{lemma:CEO-fixed-q}
For fixed $\dv Q$, there exists a unique $\dv P$ that achieves the minimum $\argmin_{\dv P} F_{\dv s}(\dv P, \dv Q)$, where $P_{U_{i}|Y_{i}}$ is given by
\begin{align}
\label{eq:CEO-p}	
p(u_{i}|y_{i}) &= \frac{ \exp [ \rho_{i}(u_{i},y_{i}) ]}{\sum_{u_{i}} \exp [ \rho_{i}(u_{i},y_{i}) ]
},\quad \text{for }\: i=1, 2, 
\end{align}
where $\rho_{i}(u_{i},y_{i})$, $i=1,2$, are defined in \eqref{eq:rho_ceo} given below. 
\end{lemma}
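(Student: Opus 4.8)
The plan is to treat this as a constrained convex optimization problem and apply the Karush--Kuhn--Tucker (KKT) conditions. By Lemma~\ref{lemma:CEO-convex}, $F_{\dv s}(\dv P, \dv Q)$ is convex in $\dv P$ for fixed $\dv Q$, so any stationary point of the Lagrangian is the unique global minimizer (uniqueness following from strict convexity of the entropy-like terms in $\dv P$). First I would write out the optimization: minimize $F_{\dv s}(\dv P, \dv Q)$ over the conditional pmfs $P_{U_1|Y_1}$ and $P_{U_2|Y_2}$, subject to the normalization constraints $\sum_{u_i} p(u_i|y_i) = 1$ for every $y_i$ and each $i = 1,2$, introducing multipliers $\lambda_i(y_i)$. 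The nonnegativity constraints $p(u_i|y_i) \geq 0$ will turn out to be inactive because the solution is an exponential, so I would proceed with equality constraints only and verify positivity at the end.

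Next I would form the Lagrangian $\mathcal{L} = F_{\dv s}(\dv P, \dv Q) + \sum_i \sum_{y_i} \lambda_i(y_i)\big(\sum_{u_i} p(u_i|y_i) - 1\big)$ and compute $\partial \mathcal{L} / \partial p(u_i|y_i)$. The delicate point is that $p(u_i|y_i)$ appears in $F_{\dv s}(\dv P, \dv Q)$ both explicitly in the self-information terms $s_i \sum_{u_i y_i} p(u_i|y_i) p(y_i) \log p(u_i|y_i)$ and implicitly through the induced marginals $p(u_1,u_2,x)$, $p(u_1,u_2)$, $p(u_2)$ that enter the $\dv Q$-dependent cross-entropy terms (via $p(u_i|x) = \sum_{y_i} p(u_i|y_i)p(y_i|x)$ and the Markov structure $U_1 - Y_1 - X - Y_2 - U_2$). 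Carefully applying the chain rule — using $\partial p(u_i|x)/\partial p(u_i|y_i) = p(y_i|x)$ and then $p(x|y_i) = p(x)p(y_i|x)/p(y_i)$ to convert sums over $x$ weighted by $p(y_i|x)$ into expressions involving $p(x|y_i)$ — and collecting terms, the derivative for $i=1$ should read $s_1 p(y_1)[\log p(u_1|y_1) + 1] - s_1 p(y_1)\rho_1(u_1,y_1) + \lambda_1(y_1) = 0$, and analogously for $i=2$ with the extra terms $\log q(u_2)$ and $-(s_1/s_2)\log p(u_2)$ appearing as in \eqref{eq:rho_ceo}. Setting the derivative to zero and solving gives $\log p(u_i|y_i) = \rho_i(u_i,y_i) - 1 - \lambda_i(y_i)/(s_i p(y_i))$, i.e. $p(u_i|y_i) \propto \exp[\rho_i(u_i,y_i)]$; the intermediate forms are exactly \eqref{eq:CEO-logp}.

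Finally I would eliminate the multipliers by enforcing normalization: $\lambda_i(y_i)$ is fixed by requiring $\sum_{u_i} p(u_i|y_i) = 1$, which yields the partition-function denominator in \eqref{eq:CEO-p}. Since the resulting $p(u_i|y_i)$ is manifestly positive, the inactive-constraint assumption is justified, and convexity (Lemma~\ref{lemma:CEO-convex}) guarantees this critical point is the unique minimizer. I expect the main obstacle to be the bookkeeping in the chain-rule differentiation: one must be scrupulous about which occurrences of $p(u_i|\cdot)$ are held fixed versus varied when differentiating the cross-entropy terms $\sum_{u_1 u_2 x} p(u_1,u_2,x)\log q(x|u_1,u_2)$ and $\sum_{u_1 u_2} p(u_1,u_2)\log q(u_1,u_2)$, and about correctly converting the resulting $p(y_i|x)$-weighted sums into the $p(x|y_i)$-form that appears in $\rho_i$. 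The role of the Markov chain $U_1 - Y_1 - X - Y_2 - U_2$ is essential here: it is what makes $p(u_1, u_2, x) = p(x)\,p(u_1|x)\,p(u_2|x)$ factorize, so that differentiating with respect to $p(u_1|y_1)$ only touches the $p(u_1|x)$ factor.
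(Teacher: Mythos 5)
Your proposal follows essentially the same route as the paper's proof: form the Lagrangian with multipliers $\lambda_i(y_i)$ for the normalization constraints, invoke the convexity of $F_{\dv s}(\dv P,\dv Q)$ in $\dv P$ (Lemma~\ref{lemma:CEO-convex}) so that the KKT conditions are necessary and sufficient, differentiate through the induced marginals to obtain the stationarity conditions \eqref{eq:CEO-logp}, and eliminate the multipliers by normalization to get \eqref{eq:CEO-p}. The approach and the key steps (including the sign-convention difference on the multipliers, which is immaterial) match the paper, so the proposal is correct.
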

\begin{proof}
We have that $F_{\dv s}(\dv P, \dv Q)$ is convex in $\dv P$ from Lemma~\ref{lemma:CEO-convex}. For a given $\dv Q$ and $\dv s$, in order to minimize $F_{\dv s}(\dv P, \dv Q)$ over the convex set of pmfs $\dv P$, let us define the Lagrangian as      	
\begin{align*}  
\mathcal{L}(\dv P, \boldsymbol\lambda) \triangleq&  F_{\dv s}(\dv P, \dv Q) + \sum \nolimits_{y_{1}} \lambda_{1}(y_{1}) [1 - \sum \nolimits_{u_{1}} p(u_{1}|y_{1}) ] \nonumber\\ 
&+ \sum \nolimits_{y_{2}} \lambda_{2}(y_{2}) [1 - \sum \nolimits_{u_{2}} p(u_{2}|y_{2}) ],
\end{align*}	
where $\lambda_{1}(y_{1}) \geq 0$ and $\lambda_{2}(y_{2}) \geq 0$ are the Lagrange multipliers corresponding the constrains $\sum \nolimits_{u_i} p(u_i|y_i) =1$, $y_i \in \mathcal{Y}_{i}$, $i=1,2$, of the pmfs $P_{U_1|Y_1}$ and $P_{U_2|Y_2}$, respectively. Due to the convexity of $F_{\dv s}(\dv P, \dv Q)$, the KKT conditions are necessary and sufficient for optimality. From the KKT conditions
\addtocounter{equation}{1} 
\begin{equation*}
%\label{eq:KKT}
\frac{\partial \mathcal{L}(\dv P, \boldsymbol\lambda)}{\partial p(u_{1}|y_{1})}  = 0,
\quad \quad
\frac{\partial \mathcal{L}(\dv P, \boldsymbol\lambda)}{\partial p(u_{2}|y_{2})}  = 0, 
\end{equation*}
 we obtain \eqref{eq:CEO-logp} at the bottom of the page.
%\begin{align*}
%&\frac{\partial p(u_{i}|x)}{\partial p(u_{i}|y_{i})} = p(y_{i}|x), 
%&&\hspace{-9em} \frac{\partial p(u_{i}|x)}{\partial p(u_{\bar{\imath}}|y_{\bar{\imath}})} = 0, \\
%&\frac{\partial p(u_{i})}{\partial p(u_{i}|y_{i})} = p(y_{i}),  
%&&\hspace{-9em} \frac{\partial p(u_{i})}{\partial p(u_{\bar{\imath}}|y_{\bar{\imath}})} = 0,  \\
%&\frac{\partial p(u_{1},u_{2})}{\partial p(u_{i}|y_{i})} = \sum \nolimits_{x} p(x) p(y_{i}|x) p(u_{\bar{\imath}}|x), \\
%&\frac{\partial p(u_{1},u_{2},x)}{\partial p(u_{i}|y_{i})} =  p(x) p(y_{i}|x) p(u_{\bar{\imath}}|x) ,\quad i=1,2.
%\end{align*}
Then, we proceeded by rearranging \eqref{eq:CEO-logp} as follows 
\addtocounter{equation}{1} 
\begin{align}
\label{eq:CEO-rearrang}
p(u_{i}|y_{i}) &= e^{\tilde{\lambda}_{i}(y_{i})} e^{\rho_{i}(u_{i},y_{i})},\quad i=1,2,   
\end{align}
where $\rho_{i}(u_{i},y_{i})$, $i=1,2$, are given by \eqref{eq:rho_ceo} below, and we define  
%\begin{align*}
$\tilde{\lambda}_{1}(y_{1}) \triangleq \lambda_{1} / [s_{1}p(y_{1})] - 1$ and $
\tilde{\lambda}_{2}(y_{2}) \triangleq [\lambda_{2}(y_{2}) - (s_{1}+s_{2})p(y_{2})] / s_{2}p(y_{2})$.
%\end{align*}
Note that $\tilde{\lambda}_{i}(y_{i})$ contain all terms independent of $u_{i}$ for $i=1,2$. Finally, the Lagrange multipliers $\lambda_{i}(y_{i})$ satisfying the KKT conditions are obtained by finding $\tilde{\lambda}_{i}(y_{i})$ such that $\sum_{u_{i}} p(u_{i}|y_{i}) = 1$, $i=1,2$. Substituting in \eqref{eq:CEO-rearrang}, $p(u_{i}|y_{i})$ can be found as in \eqref{eq:CEO-p}.   
\end{proof}   

At each iteration of Algorithm~\ref{algo:CEO}, $F_{\dv s}(\dv P^{(n)}, \dv Q^{(n)})$ decreases until eventually it converges. However, since $F_{\dv s}(\dv P, \dv Q)$ is convex in each argument but not necessarily jointly convex, Algorithm~\ref{algo:CEO} does not necessarily converge to the global optimal. In particular, next proposition shows that Algorithm~\ref{algo:CEO} converges to a stationary point of the the minimization in~\eqref{eq:Dparam}.

\begin{proposition}
\label{lemma:CEO-convergence}
The sequence $\{\dv P^{(n)},\dv Q^{(n)}\}$, $n\geq 0$ in Algorithm~\ref{algo:CEO} converges to a stationary solution of the minimization problem in~\eqref{eq:CEO-double} for $n \to\infty$.
\end{proposition}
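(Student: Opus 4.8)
The plan is to view Algorithm~\ref{algo:CEO} as a two-block alternating-minimization scheme for $F_{\dv s}(\dv P,\dv Q)$ over the (compact) product of probability simplices, and to invoke the classical convergence theory for such schemes. All the ingredients that theory requires are already in place: $F_{\dv s}$ is continuous on its feasible domain $\mc S\triangleq\{\dv P\}\times\{\dv Q\}$; it is bounded below (by the proof of Lemma~\ref{lemma:CEO-fixed-p} the gap $F_{\dv s}(\dv P,\dv Q)-F_{\dv s}(\dv P)$ is a sum of nonnegative Kullback--Leibler terms, and $F_{\dv s}(\dv P)=H(X|U_1,U_2)+s_1I(Y_1;U_1|U_2)+s_2I(Y_2;U_2)\ge 0$); it is convex in each block separately (Lemma~\ref{lemma:CEO-convex}); and for each fixed value of one block the minimization over the other block has a \emph{unique} solution given by the continuous closed-form maps \eqref{eq:CEO-Q} and \eqref{eq:CEO-p} (Lemmas~\ref{lemma:CEO-fixed-p} and~\ref{lemma:CEO-fixed-q}).

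First I would establish monotone convergence of the objective: since steps~\ref{algo:ceo-step2} and~\ref{algo:ceo-step3} each replace one block by its exact minimizer, the sequence $\{F_{\dv s}(\dv P^{(n)},\dv Q^{(n)})\}_{n\ge 0}$ is nonincreasing, and being bounded below it converges to some value $F^\star$. Next, using compactness of $\mc S$, I would take an accumulation point $(\dv P\str,\dv Q\str)$ of the iterates along a subsequence and, exploiting the continuity of the two update maps together with $F_{\dv s}(\dv P^{(n)},\dv Q^{(n)})\to F^\star$, show in the standard way that $(\dv P\str,\dv Q\str)$ is a coordinate-wise minimum of $F_{\dv s}$, i.e.\ $\dv Q\str=\arg\min_{\dv Q}F_{\dv s}(\dv P\str,\dv Q)$ and $\dv P\str=\arg\min_{\dv P}F_{\dv s}(\dv P,\dv Q\str)$; this is precisely the conclusion of the standard block-coordinate-descent convergence results, whose hypothesis (uniqueness of the per-block minimizers) was verified above. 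Finally, because the normalization constraints are separable across the two blocks, writing the KKT conditions block by block, as in the proof of Lemma~\ref{lemma:CEO-fixed-q}, shows that such a coordinate-wise minimum satisfies the KKT system of the joint problem $\min_{\dv P,\dv Q}F_{\dv s}(\dv P,\dv Q)$; and since by Lemma~\ref{lemma:CEO-fixed-p} the inner minimizer $\dv Q(\dv P)$ depends smoothly on $\dv P$ with $F_{\dv s}(\dv P)=F_{\dv s}(\dv P,\dv Q(\dv P))$, an envelope/chain-rule argument transfers stationarity to $\min_{\dv P}F_{\dv s}(\dv P)$, which by Proposition~\ref{lemma:CEO-double} and Theorem~\ref{theo:CEO-param} is the minimization appearing in \eqref{eq:CEO-double} (equivalently \eqref{eq:Dparam}).

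The hard part will be the passage from ``the objective values converge'' to ``the limit points are stationary'': alternating minimization of a function that is convex in each block but not jointly convex is not automatically stationarity-seeking, so one genuinely must use the uniqueness of the two block minimizers and the continuity of the induced maps to rule out degenerate limiting behaviour. A related point requiring care is that the KKT characterization of coordinate-wise minima needs the relevant iterates and limit points to lie where $F_{\dv s}$ is differentiable; here the softmax form of \eqref{eq:CEO-p} keeps each $P_{U_i|Y_i}$ strictly positive, and the $\dv Q$-update keeps $F_{\dv s}$ finite, so this indeed holds along the iteration. Full convergence of the sequence itself (rather than merely of its accumulation points) follows under the usual mild assumption that the set of stationary points is discrete; the complete details are deferred to~\cite{Yigit:ITW:Full}.
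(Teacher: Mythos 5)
Your argument is correct, and its first half (monotone decrease of $F_{\dv s}(\dv P^{(n)},\dv Q^{(n)})$ from the exactness of the two block updates, plus boundedness below) is exactly the paper's. Where you diverge is in how stationarity of the limit points is obtained: the paper disposes of this in one sentence by observing that the scheme is a majorization--minimization algorithm in which $F_{\dv s}(\dv P,\dv Q(\dv P^{(n)}))$ is a surrogate (majorizer) of $F_{\dv s}(\dv P)$ that is tight at $\dv P^{(n)}$ by Lemma~\ref{lemma:CEO-fixed-p}, and then appeals to the convergence theory of MM methods in \cite{Palomar:2017:MMAlgo}; this lands directly on stationarity of the reduced problem $\min_{\dv P}F_{\dv s}(\dv P)$ in \eqref{eq:Dparam}. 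You instead invoke the classical two-block Gauss--Seidel/block-coordinate-descent theory (compact feasible set, per-block convexity, uniqueness and continuity of the block minimizers from Lemmas~\ref{lemma:CEO-fixed-p} and~\ref{lemma:CEO-fixed-q}), conclude that accumulation points are coordinate-wise minima, pass to joint stationarity via the separable KKT systems, and then use a Danskin/envelope step to transfer stationarity back to $\min_{\dv P}F_{\dv s}(\dv P)$. The two routes are two framings of the same mechanism; yours is more self-contained and makes the hypotheses explicit (uniqueness of the block minimizers is genuinely needed, and you correctly flag that per-block convexity alone does not make alternating minimization stationarity-seeking), at the cost of the extra envelope argument that the MM viewpoint renders unnecessary. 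Like the paper, you establish convergence of the objective values and stationarity of limit points rather than convergence of the iterate sequence itself; your closing caveat about this is appropriate and is a point the paper glosses over. The only technical care point in your version, which you do acknowledge, is that $F_{\dv s}$ is extended-real-valued where some $q(\cdot)$ vanishes, so the continuity and differentiability claims must be restricted to the region kept positive by the softmax form of \eqref{eq:CEO-p} and the update \eqref{eq:CEO-Q}.
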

\begin{proof}
The convergence of the algorithm  follows since due to Lemma~\eqref{lemma:CEO-fixed-p} and Lemma~\eqref{lemma:CEO-fixed-q}, at the $n$-th iteration we have
\begin{align*}
F_{\dv s}( \dv P^{(n-1)}, \dv Q^{(n-1)} )
\geq F_{\dv s}( \dv P^{(n)}, \dv Q^{(n-1)} )
\geq F_{\dv s}( \dv P^{(n)}, \dv Q^{(n)} ), 
\end{align*}
which implies converge since the sequence is lower bounded. The convergence to a stationary point follows by noting that the proposed method is a maximization-minimization algorithm in which $F_{\dv s}( \dv P, \dv Q )$ is a surrogate function \cite{Palomar:2017:MMAlgo}.
\end{proof}

%\begin{figure}[]
%\centering
%\small
%\begin{tikzpicture}
%\node (in1) at (0,1) [left]{$Y_{1}$};
%\node (in2) at (0,0) [left]{$Y_{2}$};
%\node (enc1) at (1.1,1) [block, minimum width=2em, minimum height=2em, align=center] {Encoder 1};	
%\node (enc2) at (1.1,0) [block, minimum width=2em, minimum height=2em, align=center] {Encoder 2};	
%\node (dec) at (3.2,0.5) [block, minimum width=2em, minimum height=6em, align=center] {\rotatebox{90}{Decoder}};	
%\draw[->] (in1) to (enc1.west);
%\draw[->] (in2) to (enc2.west);
%\draw[->] (enc1.east) to node[above] {$R_{1}$} ($(dec.west)+(0,0.5)$);
%\draw[->] (enc2.east) to node[below] {$R_{2}$} ($(dec.west)+(0,-0.5)$);
%\draw[->] ($(dec.east)+(0,0.5)$) to ($(dec.east)+(0.5,0.5)$) node[right] {$\hat{Y}_{1}$};
%\draw[->] ($(dec.east)+(0,-0.5)$) to ($(dec.east)+(0.5,-0.5)$) node[right] {$\hat{Y}_{2}$};	
%\end{tikzpicture}
%\caption{The multiterminal source coding problem.} 
%\label{fig:MT}
%\end{figure}

\section{Multiterminal Source Coding Problem}
\label{sec:MT}

In this section, we derive a BA-type algorithm to compute the RD region of the classical two-encoder multiterminal source coding setup, following a similar approach to that in Section~\ref{sec:CEO}. In this setup, we consider two correlated memoryless sources $Y_1$ and $Y_2$ with elements in sets $\mathcal{Y}_1$ and $\mathcal{Y}_2$ and distributed according the joint pmf $P_{Y_1,Y_2}$. The sources $Y_1$ and $Y_2$ are observed at Encoder 1 and 2, each connected to a decoder through an error-free bit-pipe of capacity $R_1$ and $R_2$, respectively. The decoder wants to reproduce an estimate $\hat{Y_{1}}$ and $\hat{Y_{2}}$ of the sources $Y_1$ and $Y_2$ to within some prescribed fidelity levels $D_1$ and $D_2$, respectively; where the distortions are evaluated using the log-loss measure~\eqref{definition-log-loss-distortion-measure}, i.e., $\mathbb{E}[d(Y_1,\hat{Y}_1)] \leq D_1$ and $\mathbb{E}[d(Y_2,\hat{Y}_2)] \leq D_2$. 

The RD region of the two encoder multiterminal source coding problem under log-loss measure is characterized in the following theorem from~\cite[Theorem 6]{Weissman}.
\begin{theorem}{\cite[Theorem 6]{Weissman}}
\label{MT-theorem}
The  tuple  $(R_{1},R_{2},D_{1},D_{2}) \in \mathcal{RD}_{\mathrm{BT}}$ is achievable for the two encoder multiterminal source coding problem under log-loss iff
\begin{align*}
R_{i} & \geq I(U_{i};Y_{i}|U_{i^\mathsf{c}},Q), && \text{for }\: i=1, 2,\\
R_{1} + R_{2} & \geq I(U_{1},U_{2};Y_{1},Y_{2}|Q), \\
D_{i} & \geq H(Y_{i}|U_{1},U_{2},Q), && \text{for }\: i=1, 2.
\end{align*}
for some pmf 
$p(y_{1},y_{2}) p(u_{1}|y_{1},q) p(u_{2}|y_{2},q) p(q)$,   
where $\abs{\mathcal{U}_{1}} \leq \abs{\mathcal{Y}_{1}}$, $\abs{\mathcal{U}_{2}} \leq \abs{\mathcal{Y}_{2}}$, and $\abs{\mathcal{Q}} \leq 5$. 
\end{theorem}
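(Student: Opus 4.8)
Theorem~\ref{MT-theorem} is the two-encoder counterpart of Theorem~\ref{theo:CEO-theorem}, so the plan is to follow \cite{Weissman}: show that $\mathcal{RD}_{\mathrm{BT}}$ is achievable by a Berger--Tung scheme specialized to the log-loss measure, and establish a matching converse that exploits the fact that the log-loss distortion is lower bounded by a conditional entropy. Since the statement is quoted verbatim from \cite[Theorem~6]{Weissman}, one could also just invoke that reference; what follows is how I would reprove it.

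For achievability I would fix a pmf $p(q)\,p(u_1|y_1,q)\,p(u_2|y_2,q)$, so that conditionally on $Q$ the induced auxiliaries obey $U_1 \mkv Y_1 \mkv Y_2 \mkv U_2$, and run the distributed Wyner--Ziv (Berger--Tung) scheme with time sharing: Encoder~$i$ covers $Y_i^n$ by a codeword of a random $U_i^n$ codebook and sends the corresponding bin index, and the Berger--Tung inner bound guarantees that the decoder recovers $(U_1^n,U_2^n)$ with vanishing error whenever $R_i > I(U_i;Y_i|U_{i^{\mathsf c}},Q)$ and $R_1+R_2 > I(U_1,U_2;Y_1,Y_2|Q)$. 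The step specific to log-loss is the reconstruction: given $(U_{1,t},U_{2,t},Q_t)$ the decoder should output the \emph{soft} estimate $\hat Y_{j,t}=P_{Y_j|U_1,U_2,Q}(\,\cdot\,|\,U_{1,t},U_{2,t},Q_t)$, for which the expected per-letter log-loss is exactly $H(Y_j|U_1,U_2,Q)$; hence every $D_j\ge H(Y_j|U_1,U_2,Q)$ is met, and letting the block length grow and closing the region gives the ``if'' part.

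For the converse I would let $(J_1,J_2)$ be the indices produced by a length-$n$ code with $\frac{1}{n}\log|\mathcal{J}_i|\le R_i$, so that $\hat Y_1^n$ and $\hat Y_2^n$ are deterministic functions of $(J_1,J_2)$. The first step is the log-loss inequality of \cite{Weissman}, $n\,\mathbb{E}[d^{(n)}(Y_j^n,\hat Y_j^n)]\ge H(Y_j^n|J_1,J_2)$, which turns the fidelity constraints into $H(Y_j^n|J_1,J_2)\le nD_j$; the rate constraints give $nR_i\ge H(J_i)$ and, since $J_i$ is a function of $Y_i^n$, $n(R_1+R_2)\ge H(J_1,J_2)\ge I(J_1,J_2;Y_1^n,Y_2^n)$. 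I would then single-letterize: introduce a time-sharing variable $Q$ (a uniform time index augmented with appropriately chosen past observations, so that $Q$ is independent of the current source pair), set $Y_j=Y_{j,Q}$, and define $U_i$ from $J_i$ and $Q$; expanding by the chain rule and using the i.i.d.\ structure, that conditioning reduces entropy, and a Csisz\'ar-type sum identity to absorb the future blocks, the sum-rate bound collapses to $R_1+R_2\ge I(U_1,U_2;Y_1,Y_2|Q)$, the individual bounds to $R_i\ge I(U_i;Y_i|U_{i^{\mathsf c}},Q)$, and the fidelity bounds to $D_j\ge H(Y_j|U_1,U_2,Q)$. The cardinality bounds $|\mathcal{U}_i|\le|\mathcal{Y}_i|$ and $|\mathcal{Q}|\le 5$ would then follow from the support lemma (Carath\'eodory--Fenchel--Eggleston) applied to the functionals that define the region.

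The hard part will be the converse, and specifically arguing that the auxiliaries produced there can be taken to satisfy the \emph{product form} $p(u_1|y_1,q)\,p(u_2|y_2,q)$ demanded by the theorem: the obvious identifications of $U_1$ and $U_2$ leave them correlated given $(Y_1,Y_2,Q)$ through the future source blocks, and forcing the product form by enlarging $Q$ breaks the rate bounds; this is exactly why the analogous problem under a general distortion measure is still open, the Berger--Tung outer bound being strictly looser than the inner bound there. The reason it works under log-loss is that the penalty term is itself a conditional entropy and so combines with the mutual-information rate terms in a way that makes the offending correlated contributions cancel, or become re-absorbable into a valid product-form choice; carrying this bookkeeping out carefully, in parallel with the CEO argument behind Theorem~\ref{theo:CEO-theorem}, is essentially all that remains.
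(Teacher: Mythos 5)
First, a point of comparison: the paper does not prove this statement at all --- Theorem~\ref{MT-theorem} is imported verbatim from \cite[Theorem 6]{Weissman} and used as a black box --- so there is no in-paper argument to check you against; your proposal is really a reconstruction of the proof in \cite{Weissman}.

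As a reconstruction, the achievability half and the general shape of the converse are right, but there is a genuine gap exactly where you write that ``carrying this bookkeeping out carefully \ldots is essentially all that remains.'' That step is not bookkeeping; it is the entire content of the theorem. The standard single-letterization with $U_{i,t}$ built from $J_i$ and past/future samples produces auxiliaries that are \emph{not} conditionally independent given $(Y_{1,t},Y_{2,t},Q_t)$, and your proposed resolution --- that under log-loss ``the offending correlated contributions cancel, or become re-absorbable into a valid product-form choice'' --- is asserted, not proved, and is not how the argument in \cite{Weissman} actually runs. There, the converse never exhibits product-form auxiliaries satisfying all four inequalities simultaneously; instead the (convex) region is characterized through its supporting hyperplanes: one lower-bounds every linear functional $\alpha_1 R_1+\alpha_2 R_2+\beta_1 D_1+\beta_2 D_2$ attainable by a code and shows the resulting scalar optimization matches the value of the Berger--Tung inner bound evaluated at conditionally independent auxiliaries. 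The identity $\mathbb{E}[d^{(n)}(Y_j^n,\hat Y_j^n)]\ge \tfrac{1}{n}H(Y_j^n|J_1,J_2)$ is what lets the distortion terms merge with the rate terms into quantities that single-letterize without ever requiring product structure on the outer-bound side; this hyperplane viewpoint is also precisely what the present paper exploits computationally in the decomposition into $\mathcal{RD}_{\mathrm{BT}}^{1}\cup\mathcal{RD}_{\mathrm{BT}}^{2}$ and in Theorem~\ref{th:MTParam}. The cardinality bounds, including $\abs{\mathcal{Q}}\le 5$, then follow from the support lemma applied to that parametrized family, as you indicate. So: correct skeleton, but the decisive step of the converse is missing rather than merely deferred, and the heuristic you offer for it points in the wrong direction.
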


Similarly to Section~\ref{sec:CEO}, first we write $\mathcal{RD}_{\mathrm{BT}}$ in terms of the union of two simpler regions, and then, we propose an algorithm to compute its boundary rate-distortion pairs.  To that end, define the two RD regions $\mathcal{RD}_{\mathrm{BT}}^{i}$, $i=1,2$, as
\vspace{-1mm}
\begin{align*}
\mathcal{RD}_{\mathrm{BT}}^{i} \triangleq \{ &(R_{1},R_{2},D_{2},D_{2}) : \\
&\alpha D_{1} + \bar{\alpha} D_{2} \geq D_{\mathrm{BT},\alpha}^{i}(R_{1},R_{2}), \forall \alpha  \in [0,1] \},\vspace{-1mm}
\end{align*}
where $\bar{\alpha} \triangleq 1 - \alpha$, and
\vspace{-1mm}
\begin{align*}
&D_{\mathrm{BT},\alpha}^{i} (R_{1},R_{2}) \triangleq \argmin\; \alpha H(Y_{1}|U_{1},U_{2}) + \bar{\alpha} H(Y_{2}|U_{1},U_{2}) \\
&\text{s.t.} \quad R_{i} \geq I(Y_{i};U_{i}|U_{i^\mathsf{c}}) \text{ and } R_{i^\mathsf{c}} \geq I(Y_{i^\mathsf{c}};U_{i^\mathsf{c}}),    
\end{align*}
where the optimization is over the set of joint pmfs $P_{U_1,U_2,Y_1,Y_2}$ that satisfy $U_1\mkv Y_1 \mkv Y_2\mkv U_2$.

\begin{proposition}
$\mathcal{RD}_{\mathrm{BT}} = \mathrm{conv} ( \mathcal{RD}_{\mathrm{BT}}^{1} \cup \mathcal{RD}_{\mathrm{BT}}^{2} )$.
\end{proposition}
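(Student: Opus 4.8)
The plan is to mirror the proof of Proposition~\ref{prop:UnionCEO}, adapting it to the fact that the multiterminal problem carries two distortion constraints, one per source, so that $\mathcal{RD}_{\mathrm{BT}}$ lives in the four-dimensional space of tuples $(R_1,R_2,D_1,D_2)$. Since $\mathcal{RD}_{\mathrm{BT}}$ is closed and convex, it is determined by its supporting hyperplanes; along the two distortion coordinates these are indexed by $\alpha\in[0,1]$, which is exactly why each constituent region $\mathcal{RD}_{\mathrm{BT}}^i$ is defined through the family of weighted-distortion inequalities $\alpha D_1+\bar\alpha D_2\geq D_{\mathrm{BT},\alpha}^i(R_1,R_2)$. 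It therefore suffices to prove, for every fixed $\alpha$, that the set of triples $(R_1,R_2,\alpha D_1+\bar\alpha D_2)$ with $(R_1,R_2,D_1,D_2)\in\mathcal{RD}_{\mathrm{BT}}$ equals the convex hull of the union of the corresponding triples obtained from $\mathcal{RD}_{\mathrm{BT}}^1$ and $\mathcal{RD}_{\mathrm{BT}}^2$.

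First I would fix $(R_1,R_2,D_1,D_2)\in\mathcal{RD}_{\mathrm{BT}}$ and a pmf $p(y_1,y_2)p(u_1|y_1,q)p(u_2|y_2,q)p(q)$ realizing it via Theorem~\ref{MT-theorem}. Conditioned on each realization of the time-sharing variable $Q$, the rate constraints of Theorem~\ref{MT-theorem} describe a contra-polymatroid in $(R_1,R_2)$ with $2!=2$ extreme points, one per permutation $\pi$ of $\{1,2\}$, given (as in the CEO case, see \cite{Weissman,Chen:2008:TIT}) by $\tilde R_{\pi(1)}=I(Y_{\pi(1)};U_{\pi(1)})$ and $\tilde R_{\pi(2)}=I(Y_{\pi(2)};U_{\pi(2)}|U_{\pi(1)})$, where the chain rule together with the Markov chain $U_1\mkv Y_1\mkv Y_2\mkv U_2$ splits the sum-rate bound accordingly. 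Following \cite{Weissman}, each such extreme point, paired with the induced distortions $D_i=H(Y_i|U_1,U_2,Q)$, is achieved by time-sharing two successive Wyner--Ziv (Berger--Tung with successive refinement) strategies; the set of rate-distortion tuples achievable by such a scheme, once the distortions are collapsed into the weighted coordinate $\alpha D_1+\bar\alpha D_2$, is exactly $\mathrm{conv}(\mathcal{RD}_{\mathrm{BT}}^{\pi(1)})$, since the defining constraints $R_{\pi(1)}\geq I(Y_{\pi(1)};U_{\pi(1)}|U_{\pi(2)})$, $R_{\pi(2)}\geq I(Y_{\pi(2)};U_{\pi(2)})$ and the objective $\alpha H(Y_1|U_1,U_2)+\bar\alpha H(Y_2|U_1,U_2)$ in the definition of $D_{\mathrm{BT},\alpha}^{\pi(1)}$ match precisely the rates and distortion of the $\pi(1)$-th decoding order.

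Assembling the pieces, for the fixed pmf and each $q$ the achievable rate-distortion polytope is the convex hull over the two permutations of the two corner points, hence is contained in $\mathrm{conv}(\mathcal{RD}_{\mathrm{BT}}^1\cup\mathcal{RD}_{\mathrm{BT}}^2)$; averaging over $Q$ keeps membership in this convex set, which gives $\mathcal{RD}_{\mathrm{BT}}\subseteq\mathrm{conv}(\mathcal{RD}_{\mathrm{BT}}^1\cup\mathcal{RD}_{\mathrm{BT}}^2)$. For the reverse inclusion I would check that any joint pmf $P_{U_1,U_2,Y_1,Y_2}$ with $U_1\mkv Y_1\mkv Y_2\mkv U_2$ feasible in the definition of $\mathcal{RD}_{\mathrm{BT}}^i$ is also feasible in Theorem~\ref{MT-theorem} with $Q$ trivial (the corner rates $R_i\geq I(Y_i;U_i|U_{i^{\mathsf c}})$, $R_{i^{\mathsf c}}\geq I(Y_{i^{\mathsf c}};U_{i^{\mathsf c}})$ also meet the sum-rate bound), so $\mathcal{RD}_{\mathrm{BT}}^i\subseteq\mathcal{RD}_{\mathrm{BT}}$ for $i=1,2$, and convexity of $\mathcal{RD}_{\mathrm{BT}}$ finishes the inclusion.

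The main obstacle is to make rigorous the interplay between the \emph{two-dimensional} distortion vector and the contra-polymatroid decomposition: one has to verify that, for each fixed $\alpha$ and each corner of the rate region, minimizing $\alpha H(Y_1|U_1,U_2)+\bar\alpha H(Y_2|U_1,U_2)$ genuinely traces out the supporting hyperplanes of $\mathcal{RD}_{\mathrm{BT}}$ in all four coordinates, and that collecting the time-sharing instances contributed by the two sub-regions (each carrying its own $Q$) does not overshoot the cardinality bound $\abs{\mathcal{Q}}\le 5$. The polymatroid vertex computation and the successive-Wyner--Ziv achievability I expect to be routine given \cite{Weissman}, so the write-up would invoke them and concentrate the detail on the $\alpha$-parametrization.
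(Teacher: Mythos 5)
Your proposal is correct and follows essentially the route the paper intends: the paper omits the proof of this proposition entirely (deferring to the full version), but your argument is a faithful adaptation of the paper's own outline for Proposition~\ref{prop:UnionCEO} — contra-polymatroid corner points, successive Wyner--Ziv achievability per decoding order, and convexification — with the only new ingredient being the $\alpha$-weighting of the two distortions, which you handle consistently with how the regions $\mathcal{RD}_{\mathrm{BT}}^{i}$ are defined. You also correctly identify the one point needing extra care (that the $\forall\alpha$ supporting-hyperplane definition of $\mathcal{RD}_{\mathrm{BT}}^{i}$ interacts with the convex hull of the union only through time-sharing), which the paper itself glosses over.
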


Now, similarly to Proposition~\ref{lemma:CEO-double}, we provide a parametrization of $\mathcal{RD}_{\mathrm{BT}}^{1}$, which allows to compute each tuple on the boundary of the region as a double minimization over the conditional pmfs $\dv P = \{P_{U_1|Y_1},P_{U_2|Y_2}\}$ and some auxiliary pmfs $\mathbf{Q} \triangleq \{ Q_{Y_{1}|U_1,U_2}, Q_{Y_{2}|U_1,U_2}, Q_{U_1,U_2}, Q_{U_2} \}$,
of an auxiliary function $F_{\boldsymbol{\beta}}(\dv P,\dv Q)$ defined in \eqref{eq:MT_lag}. We have the following result similar to Theorem~\ref{theo:CEO-param},  justified with Lemma~\ref{lemma:MT-fixed-p} below.

\begin{theorem}\label{th:MTParam}
Each tuple on the boundary of $\mc RD_{\mathrm{BT}}^{1}$ can be obtained from some $\boldsymbol{\beta} \triangleq [s_1, s_2, \alpha]$, $s_1 > 0, s_2 > 0, \alpha \in [0,1]$ parametrically as $(R_{1,\boldsymbol{\beta}}, R_{2,\boldsymbol{\beta}}, D_{1,\boldsymbol{\beta}}, D_{2,\boldsymbol{\beta}})$   where
\begin{align}
&\alpha D_{1,\boldsymbol{\beta}} + \bar{\alpha} D_{2,\boldsymbol{\beta}} = -s_1R_{1,\boldsymbol{\beta}}-s_2R_{2, \boldsymbol{\beta}}+ \min_{\dv P} \min_{\dv Q} F_{\boldsymbol{\beta}}(\dv P,\dv Q), \nonumber \\  
&D_{1, \boldsymbol{\beta}} =  H(Y_{1}|U_{1}\str,U_{2}\str), \quad &&\hspace{-13em} D_{2, \boldsymbol{\beta}} =  H(Y_{2}|U_{1}\str,U_{2}\str),  \nonumber \\
&R_{1, \boldsymbol{\beta}} = I(Y_1;U_1\str|U_2\str), \quad &&\hspace{-13em} R_{2, \boldsymbol{\beta}} = I(Y_2;U_2\str),  \nonumber
\end{align}
where $\dv P\str$, $\dv Q\str$ are the pmfs yielding the minimization above.
\end{theorem}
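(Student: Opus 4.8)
The plan is to mirror the proof of Theorem~\ref{theo:CEO-param} almost verbatim, with the only new ingredient being the analogue of Lemma~\ref{lemma:CEO-fixed-p} for the multiterminal function $F_{\boldsymbol\beta}(\dv P,\dv Q)$. First I would establish, as in Lemma~\ref{lemma:CEO-fixed-p}, that for fixed $\dv P$ the inner minimization $\min_{\dv Q}F_{\boldsymbol\beta}(\dv P,\dv Q)$ is attained uniquely at $Q_{Y_1|U_1,U_2}=P_{Y_1|U_1,U_2}$, $Q_{Y_2|U_1,U_2}=P_{Y_2|U_1,U_2}$, $Q_{U_1,U_2}=P_{U_1,U_2}$, $Q_{U_2}=P_{U_2}$; this is because $F_{\boldsymbol\beta}(\dv P,\dv Q)-F_{\boldsymbol\beta}(\dv P)$ equals a nonnegative combination of Kullback--Leibler divergences (with weights $\alpha$, $\bar\alpha$, $s_1$, $s_2$), vanishing iff these identifications hold. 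Substituting these optimal $\dv Q$ back into \eqref{eq:MT_lag} and simplifying, one checks that $F_{\boldsymbol\beta}(\dv P)=\alpha H(Y_1|U_1,U_2)+\bar\alpha H(Y_2|U_1,U_2)+s_1 I(Y_1;U_1|U_2)+s_2 I(Y_2;U_2)$, which is the multiterminal Lagrangian analogous to \eqref{eq:CEO_lag1}.

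With this in hand, the argument splits into the two directions exactly as in Theorem~\ref{theo:CEO-param}. For the forward direction, let $\dv P\str$ attain the double minimum; then the induced $U_1\str,U_2\str$ satisfy $I(Y_1;U_1\str|U_2\str)=R_{1,\boldsymbol\beta}$ and $I(Y_2;U_2\str)=R_{2,\boldsymbol\beta}$, so
\begin{align*}
&-s_1R_{1,\boldsymbol\beta}-s_2R_{2,\boldsymbol\beta}+F_{\boldsymbol\beta}(\dv P\str)\\
&= \alpha H(Y_1|U_1\str,U_2\str)+\bar\alpha H(Y_2|U_1\str,U_2\str)\\
&\geq D_{\mathrm{BT},\alpha}^{1}(R_{1,\boldsymbol\beta},R_{2,\boldsymbol\beta}),
\end{align*}
showing $\alpha D_{1,\boldsymbol\beta}+\bar\alpha D_{2,\boldsymbol\beta}\geq D_{\mathrm{BT},\alpha}^{1}(R_{1,\boldsymbol\beta},R_{2,\boldsymbol\beta})$, i.e.\ the tuple lies in $\mc{RD}_{\mathrm{BT}}^1$. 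For the converse, if $\dv P\str$ solves the constrained problem defining $D_{\mathrm{BT},\alpha}^{1}(R_1,R_2)$, then $I(Y_1;U_1\str|U_2\str)\leq R_1$ and $I(Y_2;U_2\str)\leq R_2$, so for any $\boldsymbol\beta$,
\begin{align*}
D_{\mathrm{BT},\alpha}^{1}(R_1,R_2)&\geq \alpha H(Y_1|U_1\str,U_2\str)+\bar\alpha H(Y_2|U_1\str,U_2\str)\\
&\quad+s_1\big(I(Y_1;U_1\str|U_2\str)-R_1\big)+s_2\big(I(Y_2;U_2\str)-R_2\big)\\
&\geq \big(\alpha D_{1,\boldsymbol\beta}+\bar\alpha D_{2,\boldsymbol\beta}\big)+s_1(R_{1,\boldsymbol\beta}-R_1)+s_2(R_{2,\boldsymbol\beta}-R_2),
\end{align*}
and specializing $(R_1,R_2)=(R_{1,\boldsymbol\beta},R_{2,\boldsymbol\beta})$ gives the reverse inequality. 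Together these show every $\boldsymbol\beta$ produces a boundary tuple of $\mc{RD}_{\mathrm{BT}}^1$, and by varying $\boldsymbol\beta$ (in particular sweeping $\alpha\in[0,1]$ to trace the supporting hyperplanes of the convex region, and $s_1,s_2>0$ for the rate directions) every boundary tuple is obtained.

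I expect the only genuine obstacle to be bookkeeping: verifying that the $\dv Q$-substitution really collapses \eqref{eq:MT_lag} to the clean Lagrangian form, and confirming that the weighted-KKL difference is exactly what is claimed with the correct weights on each divergence term. Since the region $\mc{RD}_{\mathrm{BT}}^1$ is convex and specified by the family of supporting halfspaces indexed by $\alpha$, some care is also needed in the converse to argue that ranging over all $\boldsymbol\beta=[s_1,s_2,\alpha]$ exhausts the boundary — but this follows the same contra-polymatroid / supporting-hyperplane reasoning already used for Proposition~\ref{prop:UnionCEO} and Theorem~\ref{theo:CEO-param}, so no new idea is required. Given space constraints, I would state the result, invoke Lemma~\ref{lemma:MT-fixed-p} (the multiterminal analogue of Lemma~\ref{lemma:CEO-fixed-p}), and remark that the remaining steps are identical to the proof of Theorem~\ref{theo:CEO-param}, deferring details to \cite{Yigit:ITW:Full}.
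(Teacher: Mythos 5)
Your proposal is correct and follows essentially the same route as the paper: Lemma~\ref{lemma:MT-fixed-p} collapses the inner $\dv Q$-minimization via the weighted KL-divergence identity, after which the two-sided Lagrangian argument is a verbatim transcription of the proof of Theorem~\ref{theo:CEO-param} with $H(X|U_1,U_2)$ replaced by $\alpha H(Y_1|U_1,U_2)+\bar\alpha H(Y_2|U_1,U_2)$. Your verification that substituting the optimal $\dv Q$ reduces \eqref{eq:MT_lag} to $\alpha H(Y_1|U_1,U_2)+\bar\alpha H(Y_2|U_1,U_2)+s_1I(Y_1;U_1|U_2)+s_2I(Y_2;U_2)$ is exactly the bookkeeping the paper relies on, and your use of ``$\geq$'' (rather than equality) in the converse chain, justified by comparing against the unconstrained minimum of $F_{\boldsymbol\beta}$, is if anything slightly more careful than the paper's own write-up.
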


\begin{algorithm}[t!]
\caption{BA-type algorithm to compute $\mathcal{RD}_{\mathrm{BT}}^{1}$}
\label{algo:MT}
\begin{algorithmic}[1]
\smallskip
\Inputs{pmf $P_{Y_1,Y_2}$, parameter $\boldsymbol{\beta}$.}
\Outputs{$P_{U_1|Y_1}\str$, $P_{U_2|Y_2}\str$; $(R_{1,\boldsymbol{\beta}}, R_{2,\boldsymbol{\beta}}, D_{1,\boldsymbol{\beta}}, D_{2,\boldsymbol{\beta}})$.}
\Initialize{Set $n=0$. Choose $\dv P^{(0)}$ randomly.\\ Calculate $\dv Q^{(0)}$ by applying steps \ref{algo:ceo-step2} and \ref{algo:ceo-step3}.}
\Repeat 
\State $n \leftarrow n+1$.
\State \label{algo:mt-step2} Update $\dv P^{(n)}$ by using \eqref{eq:MT-p}.
\State Update the following pmfs. 
{\small
\begin{align*}
p^{(n)}(u_{i}) &= \sum \nolimits_{y_{i}} p(y_{i}) p^{(n)}(u_{i}|y_{i}), &&\hspace{-5em} i=1,2,\\
p^{(n)}(u_{1},u_{2},y_{i}) &= p(y_{i}) p^{(n)}(u_{1}|y_{i}) p^{(n)}(u_{2}|y_{i}), &&\hspace{-5em} i=1,2,\\
p^{(n)}(u_{1},u_{2}) &= \sum \nolimits_{y_{1}y_{2}} p(y_{1},y_{2}) p^{(n)}(u_{1}|y_{1}) p^{(n)}(u_{2}|y_{2}).
\end{align*} 
}%
\State \label{algo:mt-step3}  Update $\dv Q^{(n)}$ by using \eqref{eq:MT-Q}.
\Until{convergence.}
\end{algorithmic}
\end{algorithm}

We have the following lemmas.
\begin{lemma}	
\label{lemma:MT-fixed-p}
For fixed $\dv P$, there exists a unique $\dv Q$ that achieves the minimum $\argmin_{\dv Q} F_{\boldsymbol{\beta}}(\dv P, \dv Q) = F_{\boldsymbol{\beta}}(\dv P, \dv Q\str)$, given by 
\addtocounter{equation}{1} 
\begin{equation}	
\label{eq:MT-Q}	
\begin{aligned}
&Q_{Y_{1}|U_1,U_2} = P_{Y_{1}|U_1,U_2}, \quad Q_{Y_{2}|U_1,U_2} = P_{Y_{2}|U_1,U_2}, \\ 
&Q_{U_1,U_2} = P_{U_1,U_2}, \quad Q_{U_2} = P_{U_2}.
\end{aligned}
\end{equation}	
\end{lemma}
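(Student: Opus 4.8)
The plan is to mirror the proof of Lemma~\ref{lemma:CEO-fixed-p} essentially verbatim. First I would isolate, in the expression \eqref{eq:MT_lag} for $F_{\boldsymbol{\beta}}(\dv P,\dv Q)$, the four terms that actually depend on $\dv Q$ — the ones carrying $\log q(u_2)$, $\log q(y_1|u_1,u_2)$, $\log q(y_2|u_1,u_2)$ and $\log q(u_1,u_2)$ — while observing that the remaining three terms depend on $\dv P$ only. Writing $\dv Q\str$ for the candidate in \eqref{eq:MT-Q} and setting $F_{\boldsymbol{\beta}}(\dv P)\triangleq F_{\boldsymbol{\beta}}(\dv P,\dv Q\str)$, I note that $P_{Y_1|U_1,U_2}$, $P_{Y_2|U_1,U_2}$, $P_{U_1,U_2}$ and $P_{U_2}$ are all well-defined marginals and conditionals induced by $\dv P$ together with $P_{Y_1,Y_2}$ through the Markov structure $U_1\mkv Y_1\mkv Y_2\mkv U_2$ (they are exactly the quantities computed in steps of Algorithm~\ref{algo:MT}).

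Next I would form the difference $F_{\boldsymbol{\beta}}(\dv P,\dv Q)-F_{\boldsymbol{\beta}}(\dv P)$ and check that it rearranges, term by term, into a nonnegative combination of Kullback--Leibler divergences:
\begin{align*}
F_{\boldsymbol{\beta}}(\dv P,\dv Q)-F_{\boldsymbol{\beta}}(\dv P)=\;& s_{1}\,D_{\mathrm{KL}}(p(u_{1},u_{2})\|q(u_{1},u_{2}))+s_{2}\,D_{\mathrm{KL}}(p(u_{2})\|q(u_{2}))\\
&+\alpha\sum\nolimits_{u_{1}u_{2}}p(u_{1},u_{2})\,D_{\mathrm{KL}}(p(y_{1}|u_{1},u_{2})\|q(y_{1}|u_{1},u_{2}))\\
&+\bar{\alpha}\sum\nolimits_{u_{1}u_{2}}p(u_{1},u_{2})\,D_{\mathrm{KL}}(p(y_{2}|u_{1},u_{2})\|q(y_{2}|u_{1},u_{2}))\;\geq\;0,
\end{align*}
where each $\log(q/p)$ form is turned into a divergence by the log-sum inequality, exactly as in the proof of Lemma~\ref{lemma:CEO-fixed-p}. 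Since $s_1,s_2>0$ and $\alpha,\bar{\alpha}\geq 0$, every summand is nonnegative, so $F_{\boldsymbol{\beta}}(\dv P,\dv Q)\geq F_{\boldsymbol{\beta}}(\dv P)$ for all $\dv Q$; hence $\dv Q\str$ attains the minimum and $\argmin_{\dv Q}F_{\boldsymbol{\beta}}(\dv P,\dv Q)=F_{\boldsymbol{\beta}}(\dv P,\dv Q\str)$.

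For uniqueness I would invoke strict positivity of the KL divergence: the displayed difference vanishes only if $D_{\mathrm{KL}}(p(u_{1},u_{2})\|q(u_{1},u_{2}))=0$ and $D_{\mathrm{KL}}(p(u_{2})\|q(u_{2}))=0$, forcing $Q_{U_1,U_2}=P_{U_1,U_2}$ and $Q_{U_2}=P_{U_2}$, and — whenever $\alpha>0$ (resp.\ $\alpha<1$) — the corresponding conditional divergence is zero for every $(u_1,u_2)$ with $p(u_1,u_2)>0$, forcing $Q_{Y_1|U_1,U_2}=P_{Y_1|U_1,U_2}$ (resp.\ $Q_{Y_2|U_1,U_2}=P_{Y_2|U_1,U_2}$) on that support. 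The one point that deserves care — the only wrinkle relative to the CEO lemma, where the weights $s_1,s_2$ are always strictly positive — is the boundary of the $\alpha$-range: at $\alpha=0$ the component $Q_{Y_1|U_1,U_2}$ no longer enters $F_{\boldsymbol{\beta}}$ (symmetrically $Q_{Y_2|U_1,U_2}$ at $\alpha=1$), and likewise the conditionals at mass-zero pairs $(u_1,u_2)$ are immaterial, so ``unique'' is to be read as unique among the components of $\dv Q$ that actually appear in $F_{\boldsymbol{\beta}}$ (or one simply restricts to $\alpha\in(0,1)$). Apart from this bookkeeping the computation is routine; the main — and very modest — obstacle is just organizing the four divergence terms and keeping the coefficients $s_1,s_2,\alpha,\bar{\alpha}$ straight.
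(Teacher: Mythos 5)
Your proof is correct and follows exactly the route the paper takes for the CEO counterpart (Lemma~\ref{lemma:CEO-fixed-p}), namely writing $F_{\boldsymbol{\beta}}(\dv P,\dv Q)-F_{\boldsymbol{\beta}}(\dv P,\dv Q\str)$ as the nonnegative weighted sum of KL divergences $s_1 D_{\mathrm{KL}}(p(u_1,u_2)\|q(u_1,u_2))+s_2 D_{\mathrm{KL}}(p(u_2)\|q(u_2))+\alpha\sum_{u_1u_2}p(u_1,u_2)D_{\mathrm{KL}}(p(y_1|u_1,u_2)\|q(y_1|u_1,u_2))+\bar{\alpha}\sum_{u_1u_2}p(u_1,u_2)D_{\mathrm{KL}}(p(y_2|u_1,u_2)\|q(y_2|u_1,u_2))$, which the paper leaves to the full version but clearly intends. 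Your remark that uniqueness must be read modulo the components that drop out when $\alpha\in\{0,1\}$ or when $p(u_1,u_2)=0$ is a small but genuine refinement over the paper's unqualified statement.
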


\begin{lemma}
\label{lemma:MT-fixed-q}
For fixed $\dv Q$, there exists a unique $\dv P$ that achieves the minimum $\argmin_{\dv P} F_{ \boldsymbol{\beta}}(\dv P, \dv Q)$, where $P_{U_{i}|Y_{i}}$ is given by
\begin{align}
\label{eq:MT-p}	
p(u_{i}|y_{i}) &= \frac{ \exp [ \mu_{i}(u_{i},y_{i}) ]}{\sum_{u_{i}} \exp [ \mu_{i}(u_{i},y_{i}) ]
},\quad \text{for }\: i=1, 2, 
\end{align}
where $\mu_{i}(u_{i},y_{i})$, $i=1,2$, are defined in \eqref{eq:rho_mt}. 
\end{lemma}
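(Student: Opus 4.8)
The plan is to mirror exactly the argument used for Lemma~\ref{lemma:CEO-fixed-q}, since $F_{\boldsymbol{\beta}}(\dv P,\dv Q)$ in \eqref{eq:MT_lag} has the same structure as $F_{\dv s}(\dv P,\dv Q)$: with $\dv Q$ held fixed it is a sum of terms each of which is either linear in the entries of $P_{U_i|Y_i}$ or of the form $p(u_i|y_i)p(y_i)\log p(u_i|y_i)$, so by Lemma~\ref{lemma:CEO-convex}-type reasoning (log-sum inequality) it is convex and separable across the two conditionals $P_{U_1|Y_1}$ and $P_{U_2|Y_2}$. Because the feasible set is the product of simplices $\{\sum_{u_i}p(u_i|y_i)=1\}$, and the objective is convex on it, the KKT conditions are necessary and sufficient, and the minimizer is unique whenever it is interior (which it is, since the entropy terms force $p(u_i|y_i)>0$). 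So it suffices to write the Lagrangian, differentiate, and solve.

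The steps in order: (i) Form the Lagrangian $\mathcal{L}(\dv P,\boldsymbol\lambda) = F_{\boldsymbol{\beta}}(\dv P,\dv Q) + \sum_{y_1}\lambda_1(y_1)[1-\sum_{u_1}p(u_1|y_1)] + \sum_{y_2}\lambda_2(y_2)[1-\sum_{u_2}p(u_2|y_2)]$, exactly as in the proof of Lemma~\ref{lemma:CEO-fixed-q}. (ii) Compute $\partial\mathcal{L}/\partial p(u_i|y_i)=0$ for $i=1,2$. Here one must be careful to expand each of the seven summations in \eqref{eq:MT_lag} in terms of $p(u_i|y_i)$ using the marginalization relations in Algorithm~\ref{algo:MT} — e.g. $p(u_1,u_2,y_1)=p(y_1)p(u_1|y_1)\sum_{y_2}p(y_2|y_1)p(u_2|y_2)$ and $p(u_1,u_2,y_2)=p(y_2)p(u_2|y_2)\sum_{y_1}p(y_1|y_2)p(u_1|y_1)$, and likewise for $p(u_1,u_2)$, $p(u_2)$ — so that, say for $i=1$, the derivative picks up: the entropy term $s_1 p(y_1)(\log p(u_1|y_1)+1)$; a $-\alpha$ term giving $\alpha\sum_{u_2}p(u_2|y_1)\log q(y_1|u_1,u_2)$ weighted by $p(y_1)$; a $-\bar\alpha$ term giving $\bar\alpha\sum_{u_2 y_2}p(y_2|y_1)p(u_2|y_2)\log q(y_2|u_1,u_2)$ weighted by $p(y_1)$; and an $-s_1$ term giving $s_1\sum_{u_2}p(u_2|y_1)\log q(u_1,u_2)$ weighted by $p(y_1)$; the $q(u_2)$ and $p(u_2)\log p(u_2)$ terms do not depend on $p(u_1|y_1)$. (iii) Divide through by $s_1 p(y_1)$, collect all $u_1$-independent terms into a single constant $\tilde\lambda_1(y_1)$, and identify the remaining $u_1$-dependent part as precisely $\mu_1(u_1,y_1)$ of \eqref{eq:rho_mt}; do the analogous manipulation for $i=2$, dividing by $s_2 p(y_2)$ and noting the $+\log q(u_2)-\tfrac{s_1}{s_2}\log p(u_2)$ contribution from the two $u_2$-only terms, yielding $\mu_2(u_2,y_2)$. (iv) This gives $p(u_i|y_i)=e^{\tilde\lambda_i(y_i)}e^{\mu_i(u_i,y_i)}$; enforcing $\sum_{u_i}p(u_i|y_i)=1$ determines $\tilde\lambda_i(y_i)$ and produces the softmax expression \eqref{eq:MT-p}. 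Uniqueness follows from strict convexity of the entropy terms along the simplex directions, as in Lemma~\ref{lemma:CEO-fixed-q}.

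The main obstacle is purely bookkeeping in step (ii): unlike the CEO case where $p(u_1,u_2)$ factors through $x$ via the Markov chain $U_1\mkv Y_1\mkv X\mkv Y_2\mkv U_2$, here the coupling is directly through $Y_1\mkv Y_2$, so one must correctly substitute $p(y_2|y_1)$ (resp.\ $p(y_1|y_2)$) when differentiating the cross terms $\sum_{u_1 u_2 y_2}p(u_1,u_2,y_2)\log q(y_2|u_1,u_2)$ with respect to $p(u_1|y_1)$, and make sure the distortion terms for \emph{both} reconstructions $\hat Y_1,\hat Y_2$ (weighted by $\alpha$ and $\bar\alpha$) each contribute to \emph{each} $\partial/\partial p(u_i|y_i)$. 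Once the expansions are organized correctly the identification with \eqref{eq:rho_mt} and \eqref{eq:MT-p} is immediate, and the convexity/uniqueness half of the statement is inherited verbatim from the argument in Lemma~\ref{lemma:CEO-fixed-q}. The detailed computation is relegated to \cite{Yigit:ITW:Full}.
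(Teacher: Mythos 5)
Your proposal is correct and follows exactly the paper's template for the CEO counterpart (Lemma~\ref{lemma:CEO-fixed-q}): form the Lagrangian over the simplex constraints, impose KKT stationarity, divide by $s_i p(y_i)$, absorb the $u_i$-independent terms into $\tilde{\lambda}_i(y_i)$, and normalize to obtain the softmax \eqref{eq:MT-p}; your bookkeeping of which terms of \eqref{eq:MT_lag} contribute to each derivative (the $p(y_2|y_1)$ and $p(y_1|y_2)$ substitutions in the cross terms and the $\log q(u_2)-\tfrac{s_1}{s_2}\log p(u_2)$ contribution to $\mu_2$) matches \eqref{eq:rho_mt}. One wording quibble: the objective is not \emph{separable} across $P_{U_1|Y_1}$ and $P_{U_2|Y_2}$ --- the cross terms couple them, as your own step (ii) makes clear --- but this does not affect the stationarity derivation, and the convexity/uniqueness claim you inherit is the same one the paper itself relies on.
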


An immediate iterative optimization method follows from the two lemmas above as detailed in Algorithm~\ref{algo:MT}.
Similarly to Algorithm~\ref{algo:CEO}, Algorithm~\ref{algo:MT} converges to a stationary point.
\begin{proposition}
\label{lemma:BT-convergence}
The sequence $\{\dv P^{(n)},\dv Q^{(n)}\}$, $n\geq 0$ in Algorithm~\ref{algo:MT} converges to a stationary point of the minimization problem in Theorem~\ref{th:MTParam} for $n\rightarrow \infty$.
\end{proposition}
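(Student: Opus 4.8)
The plan is to mirror, essentially verbatim, the argument used for Algorithm~\ref{algo:CEO} in Proposition~\ref{lemma:CEO-convergence}, since $F_{\boldsymbol{\beta}}(\dv P, \dv Q)$ in \eqref{eq:MT_lag} is the multiterminal analogue of $F_{\dv s}(\dv P, \dv Q)$. First I would record the two facts that make the alternating scheme a descent method: (i) by Lemma~\ref{lemma:MT-fixed-q}, the $\dv P$-update in Step~\ref{algo:mt-step2} returns the exact (and unique) minimizer of $\dv P \mapsto F_{\boldsymbol{\beta}}(\dv P, \dv Q^{(n-1)})$, so $F_{\boldsymbol{\beta}}(\dv P^{(n-1)}, \dv Q^{(n-1)}) \geq F_{\boldsymbol{\beta}}(\dv P^{(n)}, \dv Q^{(n-1)})$; and (ii) by Lemma~\ref{lemma:MT-fixed-p}, the $\dv Q$-update in Step~\ref{algo:mt-step3} returns the exact minimizer of $\dv Q \mapsto F_{\boldsymbol{\beta}}(\dv P^{(n)}, \dv Q)$, so $F_{\boldsymbol{\beta}}(\dv P^{(n)}, \dv Q^{(n-1)}) \geq F_{\boldsymbol{\beta}}(\dv P^{(n)}, \dv Q^{(n)})$. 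Chaining these yields a monotonically non-increasing sequence of objective values. Since $F_{\boldsymbol{\beta}}(\dv P) = \min_{\dv Q} F_{\boldsymbol{\beta}}(\dv P, \dv Q) = \alpha H(Y_{1}|U_{1},U_{2}) + \bar{\alpha} H(Y_{2}|U_{1},U_{2}) + s_{1} I(Y_{1};U_{1}|U_{2}) + s_{2} I(Y_{2};U_{2}) \geq 0$ is a nonnegative combination of entropies and mutual informations, the sequence is bounded below, hence it converges.

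Second, to upgrade convergence of the value to convergence to a stationary point of the inner problem $\min_{\dv P} F_{\boldsymbol{\beta}}(\dv P)$ appearing in Theorem~\ref{th:MTParam}, I would cast the procedure as a majorization--minimization (MM) algorithm, exactly as in the CEO case. The key identity, paralleling the one in the proof of Lemma~\ref{lemma:MT-fixed-p}, is that for every $\dv P$ and every feasible $\dv Q$,
\begin{align*}
F_{\boldsymbol{\beta}}(\dv P, \dv Q) - F_{\boldsymbol{\beta}}(\dv P) =& \: \alpha \sum\nolimits_{u_{1}u_{2}} p(u_{1},u_{2}) D_{\mathrm{KL}}(p(y_{1}|u_{1},u_{2})||q(y_{1}|u_{1},u_{2})) \\
&+ \bar{\alpha} \sum\nolimits_{u_{1}u_{2}} p(u_{1},u_{2}) D_{\mathrm{KL}}(p(y_{2}|u_{1},u_{2})||q(y_{2}|u_{1},u_{2})) \\
&+ s_{1} D_{\mathrm{KL}}(p(u_{1},u_{2})||q(u_{1},u_{2})) + s_{2} D_{\mathrm{KL}}(p(u_{2})||q(u_{2})) \geq 0,
\end{align*}
with equality iff $\dv Q$ is given by \eqref{eq:MT-Q}. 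Hence $\dv P \mapsto F_{\boldsymbol{\beta}}(\dv P, \dv Q^{(n)})$, with $\dv Q^{(n)}$ the minimizer from Lemma~\ref{lemma:MT-fixed-p} evaluated at $\dv P^{(n)}$, is a global upper bound on $F_{\boldsymbol{\beta}}(\dv P)$ that touches it at $\dv P = \dv P^{(n)}$; a direct differentiation shows the gradients also agree there, so it is a legitimate surrogate in the sense of \cite{Palomar:2017:MMAlgo}. Minimizing this surrogate over the compact product of probability simplices (Step~\ref{algo:mt-step2}, via Lemma~\ref{lemma:MT-fixed-q}) is precisely the generic MM update, and the standard MM convergence theorem then yields that every limit point of $\{\dv P^{(n)}\}$ (and, by continuity of the closed-form $\dv Q$-map, the corresponding $\dv Q$) is a stationary point of $\min_{\dv P} F_{\boldsymbol{\beta}}(\dv P)$, i.e.\ of the minimization in Theorem~\ref{th:MTParam}.

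The main obstacle I anticipate is not the descent inequality, which is routine, but the verification of the hypotheses of the MM convergence theorem --- in particular the first-order agreement (tangency) of the surrogate at the current iterate and the handling of the boundary of the simplices, where some $p(u_{i}|y_{i})$ may vanish and the KL terms become singular. As in the CEO case, one must also accept that, since $F_{\boldsymbol{\beta}}(\dv P, \dv Q)$ is convex in each block separately (log-sum inequality, cf.\ Lemma~\ref{lemma:CEO-convex}) but not jointly, only stationarity rather than global optimality can be asserted; ruling out that the alternating updates stall at a point that is not a stationary point of the true objective $F_{\boldsymbol{\beta}}(\dv P)$ is the crux, and is exactly what the surrogate/MM viewpoint delivers.
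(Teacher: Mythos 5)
Your proposal is correct and follows essentially the same route as the paper, which proves the CEO-case Proposition~\ref{lemma:CEO-convergence} via the descent chain $F_{\dv s}(\dv P^{(n-1)},\dv Q^{(n-1)})\geq F_{\dv s}(\dv P^{(n)},\dv Q^{(n-1)})\geq F_{\dv s}(\dv P^{(n)},\dv Q^{(n)})$, lower-boundedness, and the majorization--minimization surrogate argument of \cite{Palomar:2017:MMAlgo}, and then asserts Proposition~\ref{lemma:BT-convergence} by the identical reasoning applied to $F_{\boldsymbol{\beta}}$. Your added details (the explicit KL-divergence gap identity for the multiterminal objective and the tangency check for the surrogate) are consistent with, and slightly more explicit than, what the paper records.
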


\vspace{-6mm}
\section{Numerical Results}
\vspace{-1mm}
\label{sec:results}
In this section, we focus on the computation of the RD region  for a binary CEO setup in which $X$ is 
a Bernoulli random variable distributed as $X \sim \operatorname{Bern}(0.5)$; the channel between the source and  Encoder $i$ is modeled as a binary symmetric channel (BSC) with a crossover probability $\alpha_{i}$ for $i = 1, 2$, i.e., $Y_{i} = X \oplus Z_{i}$, where $Z_{i} \sim \operatorname{Bern}(\alpha_{i})$.

Figure~\ref{fig:3D-sym} shows the rate-distortion tuples of regions $\mc {RD}_{\mathrm{CEO}}^1$ and $\mc {RD}_{\mathrm{CEO}}^2$ computed with Algorithm~\ref{algo:CEO} for a symmetric setup in which $\alpha_1 = \alpha_2 = 0.25$ and different values of $\dv s$. The region $\mc {RD}_{\mathrm{CEO}}$ can be obtained by computing the convex hull of these points. Additionally, the tuples of $\mc{RD}_{\mathrm{CEO}}$ achievable for $R_1=R_2$, i.e., $(R,R,D)$ are shown. 

Figure~\ref{fig:Weissman} shows the rate-distortion tuples computed for $R_{1}=R_{2}=R$ and crossover probabilities $\alpha_{1}=\alpha_{2}=\alpha = \{ 0.01, 0.1, 0.25 \} $. The results coincide with the rate-distortion pairs computed in \cite[Fig. 3]{Weissman} for the same setup by exhaustive search over the conditional pmfs $\dv P$.

Figure~\ref{fig:3D-asym}, illustrates the rate-distortion tuples of the regions $\mc {RD}_{\mathrm{CEO}}^1$ and $\mc {RD}_{\mathrm{CEO}}^2$ for a CEO setup with asymmetric crossover probabilities $\alpha_1 = 0.25$ and $\alpha_2 = 0.1$.

\begin{figure}[t!]
\centering 
\input{fig3D}
\vspace{-1.5em}
\caption{The regions $\mc {RD}_{\mathrm{CEO}}^1$ and $\mc {RD}_{\mathrm{CEO}}^2$ of the CEO setup for crossover probability $\alpha_1= \alpha_2 = 0.25$ and the tuples $(R,R,D)\in \mc{RD}_{\mathrm{CEO}}$.} 
\label{fig:3D-sym}
\vspace{-4.5mm}
\end{figure}

\begin{figure}[t!]
\centering
\input{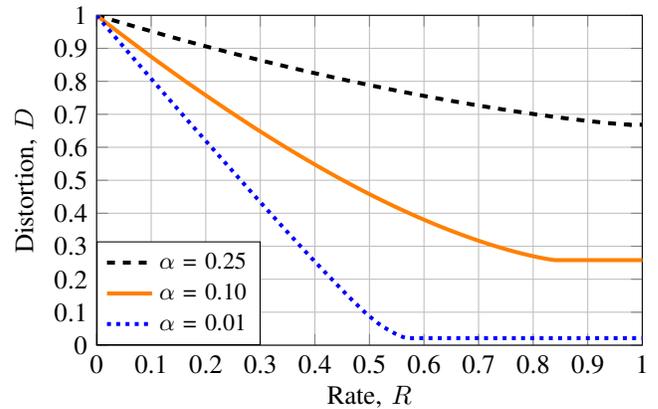}
\vspace{-8mm}
\caption{The RD region of the CEO setup for symmetric rates $R_1 = R_2 = R$ and crossover probability $\alpha = \{0.01, 0.1, 0.25\}$.} 
\label{fig:Weissman}
\vspace{4mm}
\end{figure}

\begin{figure}[t!]
\centering  
\input{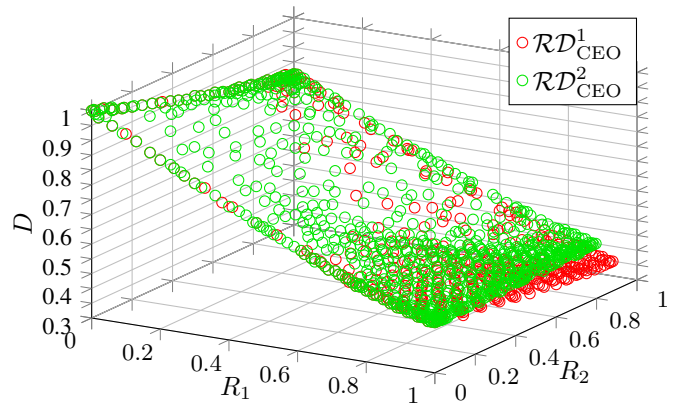}
\vspace{-5mm}
\caption{The regions $\mc {RD}_{\mathrm{CEO}}^1$ and $\mc {RD}_{\mathrm{CEO}}^2$ of the CEO setup for crossover probabilities $\alpha_1 = 0.25$ and $\alpha_2 = 0.1$.} 
\label{fig:3D-asym}
\vspace{-3mm}
\end{figure}

\vspace{-3.5 mm}
\bibliographystyle{ieeetran}
\bibliography{ITWref}

% Generated by IEEEtran.bst, version: 1.14 (2015/08/26)
\begin{thebibliography}{10}
\providecommand{\url}[1]{#1}
\csname url@samestyle\endcsname
\providecommand{\newblock}{\relax}
\providecommand{\bibinfo}[2]{#2}
\providecommand{\BIBentrySTDinterwordspacing}{\spaceskip=0pt\relax}
\providecommand{\BIBentryALTinterwordstretchfactor}{4}
\providecommand{\BIBentryALTinterwordspacing}{\spaceskip=\fontdimen2\font plus
\BIBentryALTinterwordstretchfactor\fontdimen3\font minus
  \fontdimen4\font\relax}
\providecommand{\BIBforeignlanguage}[2]{{%
\expandafter\ifx\csname l@#1\endcsname\relax
\typeout{** WARNING: IEEEtran.bst: No hyphenation pattern has been}%
\typeout{** loaded for the language `#1'. Using the pattern for}%
\typeout{** the default language instead.}%
\else
\language=\csname l@#1\endcsname
\fi
#2}}
\providecommand{\BIBdecl}{\relax}
\BIBdecl

\bibitem{Tishby}
N.~Tishby, F.~C. Pereira, and W.~Bialek, ``The information bottleneck method,''
  in \emph{Proc. of the 37th Annual Allerton Conference on Communication,
  Control and Computing}, 1999, pp. 368 -- 377.

\bibitem{C-BL06}
N.~Cesa-Bianchi and G.~Lugosi, \emph{Prediction, learning and games}.\hskip 1em
  plus 0.5em minus 0.4em\relax New York, USA: Cambridge University Press, 2006.

\bibitem{AABG06}
T.~Andre, M.~Antonini, M.~Barlaud, and R.~Gray, ``Entropy based distortion
  measure for image coding,'' in \emph{Proc. of International Conference on
  Image Processing}, Oct. 2006, pp. 1157--1160.

\bibitem{Weissman}
T.~A. Courtade and T.~Weissman, ``Multiterminal source coding under logarithmic
  loss,'' \emph{IEEE Trans. Inf. Theory}, vol.~60, no.~1, pp. 740 -- 761, Jan.
  2014.

\bibitem{Blahut}
R.~E. Blahut, ``Computation of channel capacity and rate distortion
  functions,'' \emph{IEEE Trans. Inf. Theory}, vol. IT-18, no.~4, pp. 460 --
  473, Jul. 1972.

\bibitem{Arimoto}
S.~Arimoto, ``An algorithm for computing the capacity of arbitrary discrete
  memoryless channels,'' \emph{IEEE Trans. Inf. Theory}, vol. IT-18, no.~1, pp.
  12 -- 20, Jan. 1972.

\bibitem{CSX05}
S.~Cheng, V.~Stankovi\'c, and Z.~Xiong, ``Computing the channel capacity and
  rate-distortion function with two-sided state information,'' \emph{IEEE
  Trans. Inf. Theory}, vol. IT-51, pp. 4418--4425, Dec. 2005.

\bibitem{CB04}
S.~Chiang and S.~Boyd, ``Geometric programming duals of channel capacity and
  rate distortion,'' \emph{{IEEE} Trans. Inf. Theory}, vol. IT-55, pp.
  245--259, Feb. 2004.

\bibitem{WeiYu}
F.~Dupuis, W.~Yu, and F.~M.~J. Willems, ``{Blahut-Arimoto} algorithms for
  computing channel capacity and rate-distortion with side information,'' in
  \emph{Proc. of IEEE Int. Symp. Inf. Theory}, Jun.-Jul. 2004, p. 181.

\bibitem{RG04}
M.~Rezaeian and A.~Grant, ``A generalization of {A}rimoto-{B}lahut algorithm,''
  in \emph{Proc. {IEEE} Int. Symp. on Inf. Theory}, Jun.-Jul. 2004, p. 181.

\bibitem{Yigit:ITW:Full}
\BIBentryALTinterwordspacing
Y.~U{\u{g}}ur, I.~Estella, and A.~Zaidi, ``A generalization of {Blahut-Arimoto}
  algorithm to compute rate-distortion regions of multiterminal source coding
  under logarithmic loss.'' [Online]. Available:
  \url{http://www-syscom.univ-mlv.fr/~zaidi/publications/proofs-paper-BA-itw2017.pdf}
\BIBentrySTDinterwordspacing

\bibitem{Chen:2008:TIT}
J.~Chen and T.~Berger, ``Successive {Wyner-Ziv} coding scheme and its
  application to the quadratic {Gaussian CEO} problem,'' \emph{IEEE Trans. Inf.
  Theory}, vol.~54, no.~4, pp. 1586--1603, Apr. 2008.

\bibitem{Palomar:2017:MMAlgo}
Y.~Sun, P.~Babu, and D.~P. Palomar, ``Majorization-minimization algorithms in
  signal processing, communications, and machine learning,'' \emph{IEEE Trans.
  Signal Process.}, vol.~65, no.~3, pp. 794--816, Feb. 2017.

\end{thebibliography}

\end{document}